\def\arXiv#1{\href{http://arxiv.org/abs/#1}{arXiv:#1}}
\newcolumntype{P}[1]{>{\centering\arraybackslash}m{#1}}
\def\?[#1]{\textbf{[#1]}\marginpar{\Large{\textbf{??}}}}
\def\smallsection#1{\smallskip\noindent\textbf{#1}.}
\let\epsilon=\varepsilon 
\newcommand{\R}{{\mathbb R}}
\newcommand{\RR}{{\mathbb R}}
\newcommand{\CC}{{\mathbb C}}
\newcommand{\ZZ}{{\mathbb Z}}
\newtheorem{theo}{Theorem}
\newtheorem{prop}{Proposition}[section]	
\newtheorem{defi}[prop]{Definition}
\newtheorem{assumption}{Assumption}
\newtheorem{lemm}[prop]{Lemma}
\newtheorem{corr}[prop]{Corollary}
\newtheorem{rem}{Remark}
\numberwithin{equation}{section}
\DeclareMathOperator{\Spec}{Spec}
\let\Re=\Real
\DeclareMathOperator{\supp}{supp}
\DeclareMathOperator{\tr}{tr}
\def\indic{\operatorname{1\hskip-2.75pt\relax l}}
\newcommand\reallywidehat[1]{\arraycolsep=0pt\relax%
\begin{array}{c}
\stretchto{
  \scaleto{
    \scalerel*[\widthof{\ensuremath{#1}}]{\kern-.5pt\bigwedge\kern-.5pt}
    {\rule[-\textheight/2]{1ex}{\textheight}} 
  }{\textheight} %
}{0.5ex}\\           
#1\\                 
\rule{-1ex}{0ex}
\end{array}
}
\def\tr{\operatorname{tr}}
\def\bint{{\ifinner\rlap{\bf\kern.35em--}
\int\else\rlap{\bf\kern.45em--}\int\fi}\ignorespaces}
\def\bbint{{\ifinner\rlap{\bf\kern.35em--}
\hspace{0.078cm}\int\else\rlap{\bf\kern.45em--}\int\fi}\ignorespaces}
\title[Semiclassical TMDs]{Twisted TMDS in the small-angle limit: \\ Exponentially flat and trivial bands} 
\author{Simon Becker}
\email{simon.becker@math.ethz.ch}
\address{ETH Zurich, 
Institute for Mathematical Research, 
R\"amistrasse 101, 8092 Zurich, 
Switzerland}
\author{Mengxuan Yang}
\email{mxyang@math.berkeley.edu}
\address{Department of Mathematics, University of California, Berkeley; 970 Evans Hall, Berkeley, CA, 94720-3840}
\begin{document}
\maketitle
\begin{abstract}
Recent experiments discovered fractional Chern insulator states at zero magnetic field in twisted bilayer MoTe$_2$ \cite{C23,Z23} and $WSe_2$\cite{MD23}. In this article, we study the MacDonald Hamiltonian for twisted transition metal dichalcogenides (TMDs) and analyze the low-lying spectrum in TMDs in the limit of small twisting angles. 
Unlike in twisted bilayer graphene Hamiltonians, we show that TMDs do not exhibit flat bands. 
The flatness in TMDs for small twisting angles is due to spatial confinement by a matrix-valued potential. 
We show that by generalizing semiclassical techniques developed by Simon \cite{Si83} and Helffer-Sj\"ostrand \cite{HS84} to matrix-valued potentials, there exists a wide range of model parameters such that the low-lying bands are of exponentially small width in the twisting angle, topologically trivial, and obey a harmonic oscillator-type spacing with explicit parameters. 
\end{abstract}

\section{Introduction}
The field of twistronics is the study of how the twisting angle between layers of two-dimensional materials changes their electrical properties. The first material that substantially boosted the field was twisted bilayer graphene where at certain twisting angles, the magic angles, the material exhibits superconductivity for certain fillings. Soon the field of moir\'e materials expanded to many different classes of materials, cf. \cite{TLF22,TPF22}.
More recently, a class of two-dimensional periodic semiconductors so-called transition metal dichalcogenides (TMDs) when stacked and twisted appropriately exhibited the fractional anomalous quantum Hall effect \cite{N11,R11,S11} at small angles. The purpose of this paper is to initiate a first mathematical study of the effective one-particle Hamiltonian of twisted TMDs. 

As we will see, the Hamiltonians for twisted bilayer graphene (TBG) and twisted TMDs are mathematically of quite different flavour. The former is a semiclassical Dirac operator while the latter is a semiclassical Schr\"odinger operator and therefore much more amenable to semiclassical techniques, since the semiclassical Dirac operator is unbounded from above and below while \emph{the magic} happens at zero energy. 

 Their physical properties are described by matrix-valued Schr\"odinger-type operators with complex-valued potentials. Due to the intricate structure of such operators, fine spectral theoretic properties are analytically intractable. Thus, there is a demand to develop techniques to obtain an effective spectral description in a semiclassical limit, with the twisting angle between the two lattice structures as the semiclassical parameter explaining e.g. the localized states at low band energies observed in \cite{J19}.

The Hamiltonian for twisted TMDs is a semiclassical Schr\"odinger operator 
\[ H = -h^2\Delta \operatorname{Id}_{\mathbb C^2}+ V,\]
where $V \in C^{\infty}(\mathbb R^2/\Gamma;\mathbb C^{2\times 2})$ is a specific periodic Hermitian matrix-valued potential.
The semiclassical parameter $h$ is the mechanical twisting angle. The precise form of the potential matrix is given in \eqref{eq:semiclassique}. Unlike the TBG Hamiltonian, TMD Hamiltonians and more broadly Schr\"odinger operators with matrix-valued potential do in general not exhibit flat bands. We state the result for lattices $\Gamma$ in $\mathbb R^n$ but restrict us to $\mathbb R^2$ in our proof for simplicity. 
\begin{theo}
\label{theo:no_flat}
Let $\Gamma$ be a lattice in $\mathbb R^n$. The Hamiltonian $H_V=-\Delta + V$, where $V \in L^{\infty}(\mathbb R^n/\Gamma;\mathbb C^{n \times n})$ is Hermitian, does not have any flat bands, i.e., there does not exist a $\lambda$ such that $\lambda \in \Spec_{L^2(\mathbb R^2/\Gamma)}(H_{V,\mathbf k})$ for all $\mathbf k$ with 
\[ H_{V,\bf k}:=(D_x +{\bf k})^2 \operatorname{Id}_{\CC^n} + V: H^2(\mathbb R^n/\Gamma) \to L^2(\mathbb R^n/\Gamma),\quad D_{x}:=-i \nabla_{x}.\]
\end{theo}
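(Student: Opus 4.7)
The plan is to adapt Thomas's complexified-momentum argument — originally used to prove absolute continuity of scalar periodic Schr\"odinger operators — to the matrix-valued setting. I argue by contradiction: suppose $\lambda \in \Spec(H_{V,\mathbf k})$ for every $\mathbf k \in \RR^n$. Since $H_{V,\mathbf k}$ has compact resolvent, $\lambda$ is a genuine eigenvalue at every real $\mathbf k$. The family $\mathbf k \mapsto H_{V,\mathbf k}$ extends to a type-(A) analytic family of closed operators on $\CC^n$ with common domain $H^2(\RR^n/\Gamma;\CC^n)$. The strategy is to identify a complex momentum $\mathbf k_*$ for which $\lambda$ lies in the resolvent set of $H_{V,\mathbf k_*}$, and then use analyticity to propagate a contradiction.

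\textbf{Step 1 (complex resolvent estimate).} Choose a nonzero $\mathbf b \in \Gamma$, so that $\gamma\cdot\mathbf b \in 2\pi\ZZ$ for every $\gamma\in\Gamma^*$, and a $\mathbf k_0 \in \RR^n$ with $\mathbf k_0\cdot\mathbf b\notin 2\pi\ZZ$ (a generic condition). On the Fourier mode $e^{i\gamma\cdot x}$, the operator $(D_x + \mathbf k_0 + is\mathbf b)^2\operatorname{Id}_{\CC^n}$ acts as multiplication by the scalar $|\gamma+\mathbf k_0|^2 - s^2|\mathbf b|^2 + 2is(\gamma+\mathbf k_0)\cdot\mathbf b$, whose imaginary part is at least $2sc$ in absolute value, with $c:=\dist(\mathbf k_0\cdot\mathbf b,\,2\pi\ZZ)>0$ independent of $\gamma$. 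Hence the free resolvent has $L^2$-norm at most $(2sc)^{-1}$, and a Neumann series in $V\in L^\infty$ yields $\lambda\notin\Spec(H_{V,\mathbf k_0+is\mathbf b})$ for all sufficiently large $s$. Fix such an $s_0$ and set $\mathbf k_*:=\mathbf k_0+is_0\mathbf b$.

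\textbf{Step 2 (analytic Fredholm alternative).} For $\tau\in\CC$, factor
\[
H_{V,\mathbf k_0+\tau\mathbf b} - \lambda \;=\; \big(I + K(\tau)\big)\,\big(H_{V,\mathbf k_*}-\lambda\big),\qquad K(\tau) := \big(H_{V,\mathbf k_0+\tau\mathbf b} - H_{V,\mathbf k_*}\big)\,\big(H_{V,\mathbf k_*}-\lambda\big)^{-1}.
\]
The perturbation $H_{V,\mathbf k_0+\tau\mathbf b} - H_{V,\mathbf k_*}$ is a polynomial in $\tau$ and a first-order differential operator in $x$, so $K(\tau)$ factors through the compact embedding $H^1\hookrightarrow L^2$ on $\RR^n/\Gamma$ and defines an analytic family of compact operators on $L^2(\RR^n/\Gamma;\CC^n)$. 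The analytic Fredholm theorem forces the set $\{\tau\in\CC:I+K(\tau)\text{ not invertible}\}$ to be either discrete or all of $\CC$. Since $K(is_0)=0$, the set is discrete; but under the flat-band hypothesis it contains all of $\RR$, contradiction.

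\textbf{Main obstacle.} The substantive point is the Fourier-side separation in Step 1: one has to pick a complex direction in which the shifted lattice $\Gamma^*+\mathbf k_0$ is bounded away from the imaginary axis of the complexified momentum uniformly in $\gamma$. The choice $\mathbf b\in\Gamma$ makes $\{\gamma\cdot\mathbf b:\gamma\in\Gamma^*\}\subset 2\pi\ZZ$ discrete, giving the uniform lower bound on the imaginary part; a generic real shift $\mathbf k_0$ then pushes the whole coset away from zero. After that the matrix-valued structure of $V$ is essentially inert — it enters only through the $L^\infty$-bound in the Neumann series — and the argument parallels the scalar case.
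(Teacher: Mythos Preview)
Your proof is correct and follows the same Thomas complexification strategy as the paper: pick a complex quasi-momentum at which the resolvent exists, then use analyticity of a compact-operator family to contradict the flat-band hypothesis along the real axis. The only packaging difference is that the paper factors through the \emph{free} resolvent, writing $H_{V,\mathbf k}-\lambda=(-\Delta_{\mathbf k}-\lambda)\bigl(\operatorname{Id}+(-\Delta_{\mathbf k}-\lambda)^{-1}V\bigr)$ and invoking Kato's analyticity theorem on the domain where $-\Delta_{\mathbf k}-\lambda$ is invertible, whereas you factor through the \emph{full} resolvent at a single fixed $\mathbf k_*$ and apply the analytic Fredholm alternative to the entire family $K(\tau)$---a slightly cleaner variant that sidesteps tracking the free-resolvent domain and makes the uniform imaginary-part lower bound explicit via $\mathbf b\in\Gamma$.
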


In the case of twisted TMDs one finds that the two eigenvalues $\lambda_\pm(V)$ of the potential matrix $V\in C^{\infty}(\mathbb R^2/\Gamma;\mathbb C^{2\times 2})$ are gapped, i.e., 
\[ \lambda_-(V(x)) <\lambda_+(V(x)) \text{ for all } x \in \mathbb R^2/\Gamma,\] 
where $\Gamma = v_1 \ZZ\oplus v_2 \ZZ$ is the rescaled moir\'e lattice with 
\[v_1 =-\frac{1}{2} \begin{pmatrix}\sqrt{3},1 \end{pmatrix}^t, \quad v_2 =\frac{1}{2} \begin{pmatrix}\sqrt{3},-1 \end{pmatrix}^t, \text{ and }v_3= (0,1)^t. \]

In this work we make the following assumption which holds for a large range of model parameters.
\begin{assumption}
\label{ass1}
    We assume that $0$ is the unique non-degenerate global minimum of $\lambda_-(V)$, modulo translation by lattice $\Gamma$. 
\end{assumption}
While this assumption is not strictly necessary for our methods, the energy scale at which our results hold is only meaningful when this assumption is met, see e.g., Figure \ref{fig:2} for an illustration of possible energy landscapes of $\lambda_-(V(x)).$ 
Future work will be devoted to the full parameter range which requires a more intricate analysis. 

Since $H$ is a periodic operator, we can apply the Bloch transform and study Bloch operators 
\[ H_{\mathbf k} = h^2 (D_x + \mathbf k)^2 \operatorname{Id}_{\CC^2} + V : H^2(\RR^2/\Gamma;\CC^{2\times 2}) \to L^2(\RR^2/\Gamma;\CC^{2\times 2}), \]
where $\operatorname{Id}_{\CC^2}$ is the identity on $\CC^2$ and $\mathbb C^{2\times 2}$ is dropped when there is no confusion hereafter. One then wants to compare the eigenvalues $E_n(h,\mathbf k)$ of the Bloch operators to the eigenvalues of a suitable operator $ {H}_{\text{well}}$ that exhibits only a single potential well at $0$ but not at any of its lattice translates. One thus defines
\[ {H}_{\text{well}}=H + (1-\chi)\operatorname{Id}_{\CC^2}:\ H^2(\RR^2) \to L^2(\RR^2) \]
$C_c^{\infty}(\RR^2;[0,1])\ni \chi $ is equal to $1$ in a neighbourhood to $0$ and vanishes in a neighbourhood of all $\Gamma \setminus \{0\}.$
We then have the following result which includes twisted TMD model parameters $\phi,\beta, U$ (cf.~equation \eqref{eq:semiclassique}). 
\begin{theo}
\label{theo:harmonic_approximation}
Let $(E_n(h))$ be an increasing enumeration of eigenvalues of $H_{\text{well}}$ below the essential spectrum. Let $(E'_n)_{n \ge 1}$ be an increasing enumeration of the spectrum of the operator $H_{\text{pert}}$ given by (cf.~Lemma \ref{l:harmonic})
\[ E_n' = \lambda_n h + \lambda_{-}(0) + \mathcal O_n(h^{3/2}), \text{ with } \lambda_{-}(0)=6 \cos(\phi) - \frac{ \sqrt{9 \beta^2 + U^2} }{2},\] 
where $\lambda_n$ the n-th smallest element counting multiplicity in  
$$\mathcal A=\left\{\sum_{i=1}^2 (2m_i+1) \sqrt{\lambda_{i,-}}, m\in \mathbb N_0^2 \right\}$$ 
and constants  
\begin{equation*}
    \begin{split}
\lambda_{1,-}&=-8\pi^2  \cos(\phi) +
  \frac{2\beta ^2 \pi^2}{3\sqrt{9\beta^2 + U^2}}, \text{ and }\\
 \lambda_{2,-}&=-8\pi^2 \cos(\phi) +
   \frac{6\beta ^2  \pi^2  }{\sqrt{9\beta^2 + U^2}} .
\end{split}
\end{equation*}
Then it follows that
\begin{equation}
    \lim_{h \downarrow 0} \frac{E_n(h)- E'_n}{h}=0.
\end{equation}
Similarly, let $H_{\mathbf k}$ be the Bloch operator with Bloch eigenvalues $ (E_n(h,\mathbf k))_n$, then
\begin{equation}
    \lim_{h \downarrow 0} \sup_{\mathbf k \in \mathbb R^2/\Gamma^*} \frac{\vert E_n(h,\mathbf k)-E_n'\vert}{h}=0.
\end{equation}
\end{theo}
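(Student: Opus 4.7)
The plan is to implement the semiclassical harmonic approximation of Simon \cite{Si83} and Helffer--Sj\"ostrand \cite{HS84} in the matrix-valued setting, and then to use Agmon-type exponential decay to transfer eigenvalue asymptotics from the single-well operator $H_{\text{well}}$ to the Bloch fibres $H_{\mathbf k}$. Because the eigenvalues $\lambda_\pm(V(x))$ of $V(x)$ are globally gapped, the spectral projector $\Pi_-(x)$ onto the lower band is smooth on $\mathbb R^2/\Gamma$, so one can pick a smooth unitary $U(x)\in U(2)$ diagonalizing $V$. Conjugation then gives
\[ U^\ast H U = -h^2\Delta \operatorname{Id}_{\CC^2} + \operatorname{diag}(\lambda_-(V),\lambda_+(V)) + h^2 R, \]
with $R$ a first-order differential operator built from $U^\ast \partial_j U$. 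A Feshbach--Grushin block reduction, exploiting the pointwise gap $\lambda_+ - \lambda_- \ge c>0$, produces an effective scalar operator on the $\Pi_-$-subspace whose principal symbol is $|\xi|^2 + \lambda_-(V(x))$, with remainders that are $\mathcal{O}(h^2)$ on states concentrated at the semiclassical scale $h^{1/2}$.

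By Assumption \ref{ass1} the scalar symbol $\lambda_-(V(x))$ has a unique non-degenerate minimum at $x=0$. Taylor expanding, $\lambda_-(V(x))=\lambda_-(0)+\tfrac12\langle x, H_0 x\rangle + \mathcal{O}(|x|^3)$ with Hessian $H_0$, and after the rescaling $x=h^{1/2}y$ the leading operator becomes $h(-\Delta_y+\tfrac12\langle y, H_0 y\rangle)+\lambda_-(0)$. The harmonic oscillator on the right has spectrum $\mathcal A$, where $2\lambda_{1,-}, 2\lambda_{2,-}$ are the eigenvalues of $H_0$; the explicit values for $\lambda_-(0)$ and $\lambda_{i,-}$ come from plugging in the form \eqref{eq:semiclassique} of $V$, as in Lemma \ref{l:harmonic}. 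A min--max comparison with Hermite-function quasimodes truncated to a small neighbourhood of $0$ then yields $E_n(h)=\lambda_-(0)+h\lambda_n+o(h)$, which is the first limit in the statement.

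For the Bloch comparison I would establish Agmon estimates for the reduced operator: the normalized eigenfunctions $\psi_n$ of $H_{\text{well}}$ satisfy $\|e^{(1-\varepsilon)d(x)/h}\psi_n\|_{L^2}=\mathcal{O}_\varepsilon(1)$, where $d$ is the Agmon distance associated with $\lambda_-(V)-\lambda_-(0)$, so $\psi_n$ is exponentially small in $H^2$ outside any fixed neighbourhood of $0$. Periodization
\[ \psi_n^{\mathbf k}(x) = \sum_{\gamma\in\Gamma} e^{i\mathbf k\cdot(x-\gamma)}\chi_\gamma(x)\psi_n(x-\gamma), \]
with $\chi_\gamma$ a smooth cutoff near the $\gamma$-well, yields a quasimode for $H_{\mathbf k}$ with residual $\mathcal{O}(e^{-c/h})$ uniformly in $\mathbf k$; conversely, truncating a Bloch eigenfunction near $0$ gives a quasimode for $H_{\text{well}}$. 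A semiclassical spectral gap argument along the lines of \cite{HS84} then forces $\sup_{\mathbf k\in\mathbb R^2/\Gamma^*}|E_n(h,\mathbf k)-E_n(h)|=\mathcal{O}(e^{-c/h})$, which is far stronger than the required $o(h)$.

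The main obstacle will be the adiabatic reduction in the first step: one must show that the non-abelian gauge terms $U^\ast\partial_j U$ and the coupling between the $\Pi_-$- and $\Pi_+$-sectors, which a priori enter the reduced Hamiltonian at order $h$, in fact contribute only at order $h^2$ on states concentrated at scale $h^{1/2}$. This hinges on the global spectral gap of $V$ and requires careful bookkeeping in a semiclassical pseudodifferential calculus before the scalar Simon--Helffer--Sj\"ostrand harmonic approximation can be invoked on $\lambda_-(V)$.
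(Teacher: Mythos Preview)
Your strategy is sound, but it diverges from the paper's in two places, and there is one imprecision worth flagging.

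\smallskip
\noindent\textbf{On the reduction step.} The paper does \emph{not} perform a Feshbach--Grushin reduction to a scalar operator. It conjugates by $\mathcal U(x)$ exactly as you do, obtaining $\tilde H=-h^2\Delta+\operatorname{diag}(\lambda_+,\lambda_-)+R$ with $R=h\sum_i A_i(x)\,hD_{x_i}+h^2B(x)$, and then works directly with this $2\times 2$ operator. The low-lying spectrum of the unperturbed diagonal operator is carried entirely by the $\hat e_2$-component because $\lambda_-(0)<\lambda_+(0)$; the key technical ingredient is Lemma~\ref{l:harmonic}, which bounds $\|R(\lambda-H_{\text{osc}})^{-1}\|$ by an explicit creation/annihilation operator computation, showing the $R$-perturbation moves each harmonic-oscillator eigenvalue by $\mathcal O(h^{3/2})$. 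The comparison of $\tilde H_{\text{well}}$ with $H_{\text{pert}}$ is then done by an IMS localization at scale $h^{2/5}$ (so the cubic Taylor error is $\mathcal O(h^{6/5})$) together with a standard min--max/rank argument for the lower bound. Your Feshbach route would also work, but the paper's direct approach avoids any pseudodifferential bookkeeping.

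\smallskip
\noindent\textbf{On the gauge terms.} Your claim that the non-abelian gauge contribution is $\mathcal O(h^2)$ on states concentrated at scale $h^{1/2}$ is too optimistic: the diagonal term $h\,(v_-^*\partial_j v_-)\,hD_{x_j}$ gives $h\cdot\mathcal O(h^{1/2})=\mathcal O(h^{3/2})$, not $\mathcal O(h^2)$, since $v_-^*\partial_j v_-$ has no reason to vanish at $0$. This is harmless for the $o(h)$ statement, and the paper's Lemma~\ref{l:harmonic} establishes exactly this $\mathcal O(h^{3/2})$ bound.

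\smallskip
\noindent\textbf{On the Bloch part.} The paper does \emph{not} invoke Agmon estimates here; it simply reruns the IMS argument with periodized cutoffs $j=\sum_\gamma J(\cdot+\gamma)$, $j_0=\sqrt{1-j^2}$ and Bloch trial functions $\Psi_n(\mathbf k,x)=\sum_\gamma e^{-i\mathbf k(\gamma+x)}J(x+\gamma)\varphi_n(x+\gamma)$, obtaining the $o(h)$ comparison uniformly in $\mathbf k$. Your Agmon-transfer argument is the mechanism the paper uses later for Theorem~\ref{theo:exp_narrow_bands}, and that proof \emph{cites} Theorem~\ref{theo:harmonic_approximation} to identify which eigenvalue of $H_{\mathbf k}$ the quasimode picks out. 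If you want to use Agmon transfer to prove the Bloch statement of Theorem~\ref{theo:harmonic_approximation} itself, you must also establish a periodic Agmon estimate for $H_{\mathbf k}$ and count eigenvalues in both directions, to avoid circularity; this can be done but is more work than the paper's direct IMS repetition.
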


We emphasize the difference to the case of TBG, where in the small angle limit, bands exhibit an exponentially small spacing (cf.~\cite{BEWZ22}) close to zero energy.

\begin{theo}
\label{theo:triv_bands}
    Let $H_{\mathbf k}$ be as above, then the Chern number associated with the lowest $N$ Bloch bands vanishes for $h>0$ small enough.
\end{theo}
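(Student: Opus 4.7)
The plan is to show that the complex vector bundle over the Brillouin torus $\mathbb R^2/\Gamma^*$ associated with the spectral projector $\Pi_N(\mathbf k)$ onto the lowest $N$ Bloch bands is topologically trivial for $h$ small enough. Since a rank-$N$ Hermitian vector bundle over the 2-torus is trivial if and only if it admits a smooth global non-vanishing frame, and its first Chern class vanishes precisely in that case, it suffices to exhibit such a frame.

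First I use Theorem \ref{theo:harmonic_approximation} to isolate a spectral cluster: choose $N$ so that there is a gap $\lambda_N < \lambda_{N+1}$ in the harmonic oscillator spectrum $\mathcal A$. Then the bands $E_1(h,\mathbf k),\ldots,E_N(h,\mathbf k)$ are separated from the rest of $\Spec(H_{\mathbf k})$ by a gap of order $h$, uniformly in $\mathbf k$, and $\Pi_N(\mathbf k)$ is defined by a Riesz contour integral along a fixed circle that depends smoothly on $\mathbf k$. Next, I extend the Agmon--Helffer--Sj\"ostrand localization estimates \cite{Si83,HS84} to the matrix-valued setting in order to show that the eigenfunctions $\psi_1,\ldots,\psi_N$ of $H_{\text{well}}$ decay like $e^{-d_A(x,0)/h}$ away from the unique minimum at $0$, where $d_A$ is the Agmon distance induced by $\lambda_-(V)-\lambda_-(0)$.

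This exponential localization makes the periodized sections
\[ \Psi_n(\mathbf k, x) = \sum_{\gamma \in \Gamma} e^{-i \mathbf k \cdot \gamma}\, \psi_n(x - \gamma), \qquad n = 1,\ldots,N, \]
absolutely convergent and smooth in $\mathbf k$, while their Gram matrix in $L^2(\mathbb R^2/\Gamma)$ equals $\operatorname{Id} + O(e^{-c/h})$ from the standard tunneling estimate, hence is invertible for small $h$. A direct computation gives $(H_{\mathbf k}-E_n')\Psi_n(\mathbf k) = O(e^{-c/h})$ uniformly in $\mathbf k$, and combining with the spectral gap one concludes $\Pi_N(\mathbf k)\Psi_n(\mathbf k) = \Psi_n(\mathbf k) + O(e^{-c/h})$ in $L^2$. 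A smooth Gram--Schmidt procedure applied to $\{\Pi_N(\mathbf k)\Psi_n(\mathbf k)\}_{n=1}^N$ then yields a global frame for $\Pi_N(\mathbf k)$, trivializing the Bloch bundle and forcing $c_1 = 0$.

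The main obstacle is establishing sharp matrix-valued Agmon estimates at the non-degenerate minimum of $\lambda_-(V)$. In particular, one must check that the off-diagonal components of $V$ and the bounded gap $\lambda_+(V)-\lambda_-(V)$ do not degrade the exponential decay rate, and that the decay is sharp enough for the tunneling errors in $\Psi_n$ to be genuinely exponentially small in $h$. This requires an adaptation of the Helffer--Sj\"ostrand machinery that keeps track of the eigenvalue decomposition of $V$, presumably the same technique used in Lemma \ref{l:harmonic} to derive the harmonic expansion underlying Theorem \ref{theo:harmonic_approximation}.
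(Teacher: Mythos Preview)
Your approach is correct but takes a genuinely different route from the paper's proof. The paper does \emph{not} build a global frame from Wannier-type sections. Instead it conjugates by the $\mathbf k$-independent unitary $\mathcal U(x)$ diagonalizing $V$, observes that the resulting scalar comparison operator $\hat H_{\mathbf k}=\operatorname{diag}(\lambda_+(x,hD_x+\mathbf k),\lambda_-(x,hD_x+\mathbf k))$ has real potentials and hence a time-reversal symmetry $C\hat H_{\mathbf k}=\hat H_{-\mathbf k}C$, which forces the Berry curvature to be odd and the Chern number of $\hat P(\mathbf k)$ to vanish. It then shows $\sup_{\mathbf k}\|\hat P(\mathbf k)-\tilde P(\mathbf k)\|=O(\sqrt h)$ via an elementary quasimode bound (eigenfunctions satisfy $\|h(D_x+\mathbf k)\varphi\|=O(\sqrt h)$, so the off-diagonal remainder $R(h,x,hD_x)$ applied to them is $O(h^{3/2})$, small against the $O(h)$ gap), and concludes by homotopy invariance. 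No Agmon estimates are used for Theorem~\ref{theo:triv_bands}.

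Your construction is the one the paper uses later for Theorem~\ref{theo:exp_narrow_bands}: the periodized sections $v_{\mathbf k,j}=\sum_\gamma e^{-i\mathbf k(\gamma+x)}t_\gamma(\chi_0\varphi_j)$ with the cutoff $\chi_0$ and the \emph{exact} single-well eigenvalues $E_j(h)$ (not the harmonic values $E_n'$, which would only give an $O(h^{3/2})$ quasimode error). The paper's advantage is economy: Theorem~\ref{theo:triv_bands} follows from a symmetry argument and an $O(\sqrt h)$ projection estimate, bypassing the Agmon machinery entirely. Your advantage is that the same computation simultaneously yields Theorem~\ref{theo:exp_narrow_bands} and produces explicit exponentially localized Wannier functions. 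Two small technical fixes: include a cutoff $\chi_0$ so your $\Psi_n$ genuinely lies in the domain of $H_{\mathbf k}$ on $L^2(\mathbb R^2/\Gamma)$, and replace $E_n'$ by the true well eigenvalue $E_n(h)$ in the quasimode identity, since the difference $E_n(h)-E_n'=O(h^{3/2})$ would otherwise dominate the claimed $O(e^{-c/h})$ error.
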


Finally, we show that the bands are exponentially small in size and close to the eigenvalues $E_n(h)$ of $H_{\text{well}}.$

\begin{theo}
\label{theo:exp_narrow_bands}
Let $N$ be fixed. There exists a constant $C>0$ such that uniformly in $\mathbf k \in \mathbb R^2/\Gamma^*$ for all $1\le i \le N$
\[ E_i(h,\mathbf k)= E_i(h) + \mathcal O(e^{-C/h}). \]
\end{theo}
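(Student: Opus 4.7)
The plan is to deduce exponentially small bandwidth by a standard Simon--Helffer--Sj\"ostrand tunneling argument, adapted from the scalar case to the Hermitian matrix-valued potential $V$. The key point is that under Assumption \ref{ass1}, the lowest eigenvalue $\lambda_-(V)$ of the potential matrix has a unique non-degenerate minimum at $0$ modulo $\Gamma$, and the eigenvalues $E_i(h)$ of $H_{\text{well}}$ sit at distance $\mathcal O(h)$ above $\lambda_-(0)$ with a positive gap to the essential spectrum (which starts at $\lambda_-(0)+1$ by construction). Consequently, the corresponding eigenfunctions $u_i$ are expected to be exponentially localized near $0$, which makes them essentially indistinguishable from an eigenmode of the full periodic Bloch problem.

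First, I would establish matrix-valued Agmon estimates: for the Agmon metric
\[ d_A(x,0) = \inf_{\gamma(0)=0,\gamma(1)=x} \int_0^1 \sqrt{(\lambda_-(V(\gamma(t)))-\lambda_-(0))_+}\,|\dot\gamma(t)|\,dt, \]
one proves $\|e^{(1-\varepsilon)d_A(\cdot,0)/h} u_i\|_{L^2(\RR^2;\CC^2)} = \mathcal O(1)$, by the usual $e^{\phi/h}$-conjugation trick applied to $H_{\text{well}}-E_i(h)$, using the pointwise inequality $\langle V(x)v,v\rangle_{\CC^2} \geq \lambda_-(V(x))\,|v|^2$. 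Since $0$ is the unique minimum modulo $\Gamma$, we have $d_A(\gamma,0)\geq d_0>0$ for every $\gamma \in \Gamma\setminus\{0\}$. Next, choose $\chi_0\in C_c^{\infty}(\RR^2;[0,1])$ equal to $1$ near $0$ and supported in a fundamental domain containing $0$, and build the quasi-mode for $H_{\mathbf k}$ by Bloch-periodizing:
\[ \psi_{\mathbf k, i}(x) = \sum_{\gamma\in\Gamma} e^{-i\mathbf k\cdot\gamma}\,\chi_0(x-\gamma)\,u_i(x-\gamma). \]
The Agmon estimate yields $\|\psi_{\mathbf k,i}\|_{L^2(\RR^2/\Gamma)} = 1+\mathcal O(e^{-C/h})$ and, since the commutator $[-h^2\Delta,\chi_0]$ is supported where $d_A(\cdot,0)\geq d_1>0$,
\[ \|(H_{\mathbf k} - E_i(h))\psi_{\mathbf k,i}\|_{L^2(\RR^2/\Gamma)} = \mathcal O(e^{-C/h}), \]
with constants independent of $\mathbf k$, because $\mathbf k$ enters only through the unimodular Bloch phase.

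The upper bound $E_i(h,\mathbf k)\leq E_i(h) + \mathcal O(e^{-C/h})$ then follows from the min-max principle applied to the $N$-dimensional trial space $\mathrm{span}\{\psi_{\mathbf k,1},\dots,\psi_{\mathbf k,N}\}$, after checking via Agmon estimates on the overlaps that its Gram matrix is $\mathrm{Id}+\mathcal O(e^{-C/h})$. For the lower bound $E_i(h,\mathbf k)\geq E_i(h)-\mathcal O(e^{-C/h})$, I would take the first $N$ Bloch eigenfunctions of $H_{\mathbf k}$, apply an IMS partition of unity subordinate to the $\Gamma$-covering, and Agmon-localize each piece near a single well; after unwinding the Bloch phase each localized piece becomes a trial function for $H_{\text{well}}$ whose energy differs from the original by $\mathcal O(e^{-C/h})$, so the min-max principle for $H_{\text{well}}$ yields the claim. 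The main technical obstacle is ensuring that all Agmon, IMS, and commutator estimates carry constants that are genuinely independent of $\mathbf k\in\RR^2/\Gamma^*$; this is handled by the fact that the only $\mathbf k$-dependence in $H_{\mathbf k}$ enters through bounded phases, so Agmon exponential weights and partition-of-unity error terms transform covariantly under the Bloch twist and produce uniform estimates.
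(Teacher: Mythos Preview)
Your proposal is correct and the upper-bound half---building Bloch-periodized quasi-modes $\psi_{\mathbf k,i}$ from the single-well eigenfunctions, verifying approximate orthonormality via Agmon decay, and estimating $(H_{\mathbf k}-E_i(h))\psi_{\mathbf k,i}$ using that the commutator $[-h^2\Delta,\chi_0]$ lives where the Agmon weight is bounded below---is essentially the paper's argument verbatim (the paper's $v_{\mathbf k,j}$ are your $\psi_{\mathbf k,j}$, and the paper's Corollary~\ref{corr:well} is your matrix-valued Agmon estimate).

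The difference is in how the lower bound is obtained. You propose the symmetric construction: establish a periodic Agmon estimate for the Bloch eigenfunctions of $H_{\mathbf k}$, IMS-localize them to a single cell, unwind the phase, and feed the result into the min-max for $H_{\text{well}}$. This works and is the classical Helffer--Sj\"ostrand route, but it requires proving the periodic Agmon estimate uniformly in $\mathbf k$ (which you correctly flag as the main technical point). The paper takes a shortcut: it does \emph{not} build quasi-modes in the reverse direction at all. Instead, it invokes Theorem~\ref{theo:harmonic_approximation} to guarantee that both $H_{\mathbf k}$ and $H_{\text{well}}$ have exactly $N$ eigenvalues in a common interval $I(h)$ of length $\mathcal O(h)$, with the rest of the spectrum at distance $\gtrsim h$. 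Once that eigenvalue count is known, the $N$ orthonormal quasi-modes you already built force the $N$-dimensional spectral subspace of $H_{\mathbf k}$ in $I(h)$ to be $\mathcal O(e^{-C/h})$-close to $\mathrm{span}\{\psi_{\mathbf k,i}\}$, and eigenvalue matching (both directions) follows from Hermitian matrix perturbation. Your approach is more self-contained and avoids relying on the harmonic approximation for the counting; the paper's is shorter because it cashes in the already-proven Theorem~\ref{theo:harmonic_approximation} and never needs Agmon estimates on the periodic side.
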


\subsection{Agmon estimates, tight-binding limits \& related work}
To put our work into context, we want to mention some related works in the field of scalar Schr\"odinger operators. 
Agmon estimates \cite{A79}, see also Carmona--Simon \cite{CS81}, are by now classical estimates to control the decay of eigenfunctions in the classically forbidden region using a pseudo-metric, the Agmon distance $d_E:=(V(x)-E)_+ \ dx.$ This quantity measures the distance to the classical region defined by an energy threshold $E$. The importance of Agmon estimates in semiclassical analysis has been pointed out by Simon \cite{S83} and Helffer-Sj\"ostrand \cite{HS84} for the double well problem. In the simplest case, one has
\begin{theo}\cite[Theo.\,$3.4$]{HS12}
Let $H=-\Delta + V$ with $V$ continuous and real such that $H$ is closed with $\Spec(H)\subset \mathbb R$.  Then for $E \in \Spec(H)$ an eigenvalue such that $\supp(\{x; (E-V(x))_+\})$ is compact, we have 
\[\int_{\mathbb R^n} e^{2(1-\varepsilon)d_E(x,0)} \vert \psi(x)\vert^2 \ dx \le c_{\varepsilon} \text{ for all }\varepsilon>0. \]
\end{theo}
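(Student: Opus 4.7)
The proof is the classical Agmon method, organized around an exponential-weight positivity identity. Introduce the Agmon pseudo-distance $d_E(x,0)=\inf_{\gamma}\int_0^1\!\sqrt{(V(\gamma(t))-E)_+}\,|\dot\gamma(t)|\,dt$, taken over Lipschitz paths from $0$ to $x$; this is a Lipschitz function satisfying the Eikonal inequality $|\nabla d_E|^2\le (V-E)_+$ almost everywhere. Fix $\varepsilon\in(0,1)$ and set the weight $\rho:=(1-\varepsilon)\,d_E(\cdot,0)$, so that $|\nabla\rho|^2\le(1-\varepsilon)^2(V-E)_+$.

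The engine is the exponential conjugation identity
\[
\Re\langle(H-E)\psi,\,e^{2\rho}\psi\rangle
=\|\nabla(e^\rho\psi)\|^2+\int\bigl(V-E-|\nabla\rho|^2\bigr)\,|e^\rho\psi|^2\,dx,
\]
obtained by expanding $\nabla(e^\rho\psi)=e^\rho(\nabla\psi+\psi\nabla\rho)$ in the gradient norm and integrating by parts. Since $(H-E)\psi=0$, the left-hand side vanishes, so splitting the integrand according to $\{V>E\}$ versus $\{V\le E\}$ gives
\[
\|\nabla(e^\rho\psi)\|^2+\int_{\{V>E\}}\!\bigl(V-E-|\nabla\rho|^2\bigr)|e^\rho\psi|^2\,dx
= \int_{\{V\le E\}}\!\bigl(|\nabla\rho|^2+(E-V)\bigr)|e^\rho\psi|^2\,dx.
\]
By the Eikonal inequality, $V-E-|\nabla\rho|^2\ge\varepsilon(2-\varepsilon)(V-E)\ge0$ on the forbidden set, so the left-hand side is nonnegative. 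By hypothesis the right-hand side is supported on the compact classically allowed set, on which $\rho$, $|\nabla\rho|$, and $E-V$ are all bounded; hence it is at most $C_\varepsilon\|\psi\|^2_{L^2}$.

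From this weighted bound I would recover the asserted pointwise estimate by decomposing $\mathbb R^n$ into three pieces. On the compact allowed region the weight $e^{2\rho}$ is bounded and contributes $\le C_\varepsilon'\|\psi\|^2_{L^2}$. On the deep forbidden region $\{V-E\ge\delta\}$ the inequality $|e^\rho\psi|^2\le\delta^{-1}(V-E)|e^\rho\psi|^2$ combined with the previous display gives $\int|e^\rho\psi|^2\le C_\varepsilon/\delta$. On the transition layer $\{0<V-E<\delta\}$, since $E$ is an eigenvalue with compact classical region, Persson-type reasoning yields $V\ge E+c$ outside a large ball, so for $\delta<c$ the transition layer is contained in a compact set on which the weight is bounded, and its contribution is controlled by $\|\psi\|_{L^2}^2$.

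The main obstacle is strictly technical: the identity presupposes $e^\rho\psi\in H^1$, which is what we are trying to prove. The standard resolution is a double truncation. Replace $\rho$ by the bounded Lipschitz $\rho_N:=\min(\rho,N)$ (for which $|\nabla\rho_N|\le|\nabla\rho|$) and apply the identity to the compactly supported $\chi_R\psi$ for a smooth cutoff $\chi_R$ equal to $1$ on $B_R$. For $R$ large the gradient of $\chi_R$ is supported in a region where $V-E\ge c>0$, and the commutator $[-\Delta,\chi_R]\psi$ produces terms absorbed by the positive coefficient $(V-E-|\nabla\rho_N|^2)$ on the left. Monotone passage $N,R\to\infty$ through Fatou's lemma then promotes the truncated estimate to the bound with the original unbounded weight, yielding $\int e^{2(1-\varepsilon)d_E(x,0)}|\psi|^2\,dx\le c_\varepsilon$.
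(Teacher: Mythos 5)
Your argument is the classical Agmon method built around the exponential-conjugation identity, and it is the same engine the paper uses for its matrix-valued analogue in Proposition~\ref{prop:Agmon}: multiply by $e^{2\rho}$, integrate by parts to produce $\|\nabla(e^{\rho}\psi)\|^2 + \int(V-E-|\nabla\rho|^2)|e^{\rho}\psi|^2$, then split the sign of the potential term. The truncation argument (replace $\rho$ by $\min(\rho,N)$, cut off at scale $R$, pass to the limit via Fatou) is also the standard way to justify the formal identity, and matches the remark following Proposition~\ref{prop:Agmon}. So in spirit and substance your proof is correct and aligned with the reference.

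One step deserves scrutiny: the claim that ``Persson-type reasoning yields $V\ge E+c$ outside a large ball'' from the stated hypothesis that $\supp\{(E-V)_+\}$ is compact. These are not equivalent. Compactness of the classically allowed region gives $V\ge E$ off a compact set, but not a uniform gap; $V$ could approach $E$ from above along a sequence escaping to infinity, and Persson's theorem controls $\inf\sigma_{\mathrm{ess}}(H)$, not pointwise behaviour of $V$ at infinity. In that degenerate situation the transition layer $\{0<V-E<\delta\}$ need not be compact, and the coefficient $\varepsilon(2-\varepsilon)(V-E)$ you rely on degenerates there; a one-dimensional example with $V=E+x^{-2}$ near infinity even shows the stated conclusion can fail for small $\varepsilon$. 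The fix is to invoke the genuinely stronger (and standard) hypothesis $\liminf_{|x|\to\infty}V>E$, which is what the text itself assumes immediately after the theorem statement and which makes the transition layer compact for small $\delta$. So your decomposition is right, but you should state explicitly that you are using $\liminf V>E$ rather than attributing it to Persson's theorem or to compactness of the well alone.
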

Thus, for $V \in C^{\infty}(\mathbb R^n)$ with $r_0:=\liminf_{\vert x \vert \to \infty} V > \inf V=0$, we have discrete spectrum below the essential spectrum $[r_0,\infty)$ of the semiclassical Schr\"odinger operator $H=-h^2 \Delta + V(x)$ by Persson's theorem (see also \cite[Theorem 7.3]{zw12}). The decay of the associated eigenfunctions is controlled as described by the theorem. The Agmon estimates are in general not optimal and lower bounds may differ substantially, as pointed out by Erd\"os \cite{E96} who obtained Gaussian decay estimates for magnetic eigenfunctions below the essential spectrum. 

In case that $V$ exhibits infinitely many non-degenerate wells situated at points $\Gamma$, Carlsson \cite{C90} proved a general result about the characterization of the low-lying spectrum of semiclassical Schr\"odinger operators with infinitely many wells in terms of an infinite matrix acting on $\ell^2(\Gamma)$. Carlsson's result rests on essentially three ingredients: 
\begin{enumerate}
\item The exponential decay of the single well wavefunctions by filling all wells but one, 
\item controlling the density of wells by making appropriate assumptions limiting the accumulation of wells,  
\item a control of norms of various infinite matrices.
\end{enumerate}

For the special case when $\Gamma$ is a lattice and $V$ a periodic potential with respect to $\Gamma$, i.e. $V(x + \gamma)=V(x)$ for all $x \in \mathbb R^n$ and $\gamma \in \Gamma$, Carlsson's infinite matrix corresponds to a convolution operator which shows that the first band of the semiclassical problem is exponentially close to the ground-state energy of the \emph{one-well} reference Hamiltonian. This has been extended by Klopp \cite{K91} to perturbations of the periodic scenario. 

Largely independent of the previous developments, but still following the idea of \emph{filling the wells}, Fefferman, Lee-Thorp, and Weinstein \cite{FLW17} studied the tight-binding limit for the honeycomb lattice and obtained stronger and more explicit convergence results. In particular, they obtained a resolvent norm convergence \cite[Theo. $6.2$]{FLW17} for a rescaled version of the semiclassical Floquet-Bloch Hamiltonian $\tilde H_{\mathbf k}$ and the corresponding tight-binding Hamiltonian $H^{\text{TB}}_{\mathbf k}$
\[(\tilde H_{\mathbf k}-\lambda)^{-1} - J (H^{\text{TB}}_{\mathbf k}-\lambda)^{-1} J^*  = \mathcal O(e^{-c/h}), \]
with $J$ maps lattice points to appropriate $L^2$ Bloch functions. 
This approach, using the Schur complement formula guarantees that topological properties of the band structure are preserved in the tight-binding limit. Therefore, the approach by Fefferman, Lee-Thorp, and Weinstein also justifies the use of tight-binding models for the study of topological properties.

The graphene Hamiltonian has also been analyzed, along with the triangular lattice, in the PhD thesis of Kerdelhu\'e \cite{Ke91} for weak magnetic fields, building up on the work of Carlsson, where he derives the tight-binding limit and obtains first results on the self-similar structure on the spectrum.

In the presence of magnetic fields, a smallness condition on the magnetic field has been usually imposed to overcome the non-real valued ground states for general magnetic fields. This has been partly overcome in the works by Fefferman, Shapiro, and Weinstein \cite{FSW22} and Shapiro-Weinstein \cite{SW22} where an additional radial assumption allows them to obtain a lower bound on the tunnelling coefficient in the presence of comparably large magnetic fields, see also Helffer-Kachmar \cite{HK23}. Fefferman, Shapiro, and Weinstein also announce forthcoming results showing that the radial assumption is a necessary assumption to obtain a lower bound on the tunnelling coefficient. 

We believe that this strategy here can also be used to study twisted TMDs satisfying Assumption \ref{ass1} in at least constant magnetic fields. 

A full tight-binding reduction for matrix-valued potentials seems desirable to get a precise understanding of the band structure in the semiclassical limit than we do. While our approach gives a good understanding when Assumption \ref{ass1} holds, going beyond this restriction requires a more intricate analysis and will be addressed in future work. 

Outline of the paper 
\begin{itemize}
\item In Section \ref{sec:Hamiltonian}, we introduce the continuum model for twisted TMDs and show that it does not exhibit any flat bands, i.e., proof Theorem \ref{theo:no_flat}.
\item In Section \ref{sec:harmony}, we roughly locate the low-lying spectra of $H_{\text{well}}, H_{\mathbf k}$ using a harmonic approximation and prove Theorem \ref{theo:harmonic_approximation}.
\item In Section \ref{sec:topology}, we prove Theorem \ref{theo:triv_bands}.
\item In Section \ref{sec:Agmon}, we introduce Agmon distances and decay estimates tailored to our operator.
\item In Section \ref{sec:exp_width} we prove Theorem \ref{theo:exp_narrow_bands} and show that bands are exponentially narrow in the twisting angle as $h \downarrow 0$.
\end{itemize}

\smallsection{Notations}
We denote by $D_{x}:=-i \nabla_{x}$ the self-adjoint momentum operator and $D_{x_i}=-i\partial_{x_i}$, accordingly. By $\sigma_i$, we  denote the $i$-th Pauli matrix with 
\[ \sigma_1 = \begin{pmatrix} 0 & 1 \\ 1 & 0 \end{pmatrix}, \quad \sigma_2 = \begin{pmatrix} 0 & - i \\ i & 0 \end{pmatrix}\text{ and }
\sigma_3 = \operatorname{diag}(1,-1).
\]
For a self-adjoint matrix $A \in \CC^{n \times n}$ we denote by $\lambda_1(A) \le...\le \lambda_n(A)$ the eigenvalues. The standard basis vectors are denoted by $\hat{e}_1,...,\hat{e}_n.$
The identity matrix on $\mathbb C^{n}$ is denoted by $\operatorname{Id}_{\mathbb C^n}.$

\section{The continuum Hamiltonian for twisted transition metal dichalcogenides}
\label{sec:Hamiltonian}
Transition metal dichalcogenides are semiconductors. The single valley Hamiltonian for holes in the valence band at twisting angle $\theta$ is \cite{W19,DCZ21,D23}
\[ H_{\text{valley}}= \begin{pmatrix} -\Delta_x + \alpha V_{\uparrow}(\theta x)   & \beta T(\theta x) \\ \beta T(\theta x)^* & -\Delta_x  + \alpha V_{\downarrow}(\theta x) \end{pmatrix} + \sigma_3 U/2,\]
where $U>0$ is the displacement field, the moir\'e intralayer potential is given by
\[\begin{split}V_{\uparrow/\downarrow}(x) 
&= 2\sum_{j=1}^3 \cos\Big(\frac{4\pi}{\sqrt{3}}\langle R^{2(j-1)} e_1 ,  x \rangle \pm \phi\Big) \\
&= 2 \Big(\cos( \tfrac{4 \pi x_1}{\sqrt{3}} \pm \phi) + 
   2 \cos( \tfrac{2 \pi x_1}{\sqrt{3}}\mp \phi) \cos(2 \pi x_2)\Big) \end{split} \]
and the interlayer tunnelling potential by 
\[ \begin{split} T(x) &=   1 + e^{\frac{4\pi i}{\sqrt{3}} \langle Re_1 , x \rangle } + e^{\frac{4\pi i}{\sqrt{3}}\langle R^2e_1,x \rangle} \\
&= 1 + e^{-\frac{2 i \pi x_1}{\sqrt{3}} - 2 i \pi x_2} + e^{-\frac{2 i \pi x_1}{\sqrt{3}} + 2 i \pi x_2} \\
  &=1+2e^{-\frac{2 i \pi x_1}{\sqrt{3}}} \cos(2\pi x_2)
    \end{split}\]
with $R = \tfrac{1}{2} \begin{pmatrix} - 1 & -\sqrt{3} \\ \sqrt{3}& -1 \end{pmatrix}$ the rotation by $2\pi/3,$ and $\alpha,\beta>0$ are coupling parameters. 

To reduce this Hamiltonian to unit length scales, we perform the substitution $x\theta \mapsto x$ and obtain the semiclassical Hamiltonian (in the small twisting angle regime $\theta\ll 1$) 
\begin{equation}
\label{eq:semiclassique}
 H := \begin{pmatrix} -h^2 \Delta_x + \alpha V_{\uparrow}( x)   & \beta T( x) \\ \beta T( x)^* & -h^2 \Delta_x  + \alpha V_{\downarrow}( x) \end{pmatrix} + \sigma_3 U/2 = -h^2\Delta_x\cdot \mathrm{Id}_{\CC^2} + V
 \end{equation}
where we define $h:=\theta$ to be the semiclassical parameter.
This Hamiltonian is periodic with respect to the moir\'e lattice $\Gamma = v_1 \ZZ\oplus v_2 \ZZ$ with 
\[v_1 =-\frac{1}{2} \begin{pmatrix}\sqrt{3},1 \end{pmatrix}^t, \quad v_2 =\frac{1}{2} \begin{pmatrix}\sqrt{3},-1 \end{pmatrix}^t, \text{ and }v_3= (0,1)^t, \]
where $R v_{i}=v_{i+1}$ where $R$ is the rotation by $2\pi/3.$

\subsection{Symmetries \& Absence of flat bands}
We notice the symmetries $V_{\uparrow/\downarrow}(x+q)=V_{\uparrow/\downarrow}(x)$ and $T(x+q)=T(x)$ for $q \in \Gamma=v_1 \ZZ +v_2 \ZZ .$
The potential $V_{\uparrow/\downarrow}$ and $T$ has symmetries
\begin{gather*}
    V_{\uparrow/\downarrow}(R x) = V_{\uparrow/\downarrow}(x), \quad V_{\uparrow/\downarrow}(-x) = V_{\downarrow/\uparrow}(x),\\
    T(Rx)=T(x), \quad T(-x) = T(x)^*.
\end{gather*}
Let $\mathcal Ru(x):=u(Rx)$ and $\mathcal T_qu(x):=u(x+q)$ with $q \in \Gamma$, then we have 
\[ \mathcal R H \mathcal R = H \text{ and } \mathcal T_q^* H \mathcal T_q = H.\]
We thus have, for $U=0$, the $C_2T$ symmetry $\mathcal Cu(x) := \sigma_1 \overline{u(-x)}$
\[ \mathcal CH\mathcal C=H. \]
\begin{rem}[Honeycomb-lattice potentials]
When $V_{\uparrow} = V_{\downarrow}$, then $V$ is a honeycomb-lattice potential as studied by Fefferman-Weinstein, see \cite{FW12}. If this condition does not hold, then the model does not exhibit Dirac points in general, see Figure \ref{fig:phi}.
\end{rem}
\begin{rem}[Coordinates for numerics]
For numerical simulations, it is also convenient to perform a change of variables $x_1+i x_2 = \frac{i\omega}{2\pi}( y_1 + \omega y_2)$ with $y_1,y_2 \in \RR/(2\pi \ZZ)$ such that $x_1 =\frac{ \sqrt{3}(y_2-y_1)}{4\pi}$ and $x_2=\frac{-y_1-y_2}{4\pi}$
\[ V_{\text{num},\uparrow/\downarrow}(y)=2\Big(\cos\big( y_1  \pm \phi\big)+ \cos\big( y_2  \mp \phi \big)  + \cos\big( ( y_1-y_2)\mp \phi \big)\Big)\] 
and $T_{\text{num}}(y) =\omega(1 + e^{ -i y_1} + e^{iy_2}).$
\end{rem}
\subsection{Bloch-Floquet theory}

The periodicity of the TMDs Hamiltonian $H: H^2(\RR^2)\to L^2(\RR^2)$ in \eqref{eq:semiclassique} with respect to the lattice $\Gamma$ means that we can study the operator using the Bloch-Floquet-transformed Laplacian $-\Delta_{\mathbf k} = (D_x-\mathbf k)^2$ and 
\[ H_{\mathbf k} := \begin{pmatrix} -h^2\Delta_{\mathbf k}  + \alpha V_{\uparrow}(x)  &\beta T(x) \\ \beta \overline{ T(x)} & -h^2\Delta_{\mathbf k} +  V_{\downarrow}(x) \end{pmatrix}: H^2(\RR^2/\Gamma) \to L^2(\RR^2/\Gamma)\]
for $\mathbf k \in \CC/\Gamma^*,$ where $\Gamma^*$ is the dual lattice of $\Gamma.$
We immediately have from Bloch-Floquet theory that 
\[ \Spec_{L^2(\CC)}( H_{\text{sem}}) = \bigcup_{\mathbf k \in \RR^2/\Gamma^*} \Spec_{L^2(\RR^2/\Gamma)}(H_{\mathbf k}).\]
We refer to \cite[Chapter 5]{notes} for a detailed presentation of Bloch-Floquet theory.

\subsection{Absence of flat bands}
We shall now prove Theorem \ref{theo:no_flat} and show in particular that twisted TMDs do not exhibit flat bands. The original idea of showing absence of flat bands has been introduced by Thomas \cite{T73}.
\begin{proof}[Proof of Theo.\,\ref{theo:no_flat}]
We use the identity
\begin{equation}
    \label{eq:res_idd}
    H_{V,\mathbf k} - \lambda = (-\Delta_{\mathbf k}-\lambda)\left(\operatorname{Id}+(-\Delta_{\mathbf k}-\lambda)^{-1} V \right).
\end{equation}
Thus $\lambda \in \Spec_{L^2(\RR^n/\Gamma)}(H_{V,\mathbf k})$ implies that either $1 \in \Spec_{L^2(\RR^n/\Gamma)}\Big( (\Delta_{\mathbf k} +\lambda)^{-1} V \Big)$ or $\lambda \in \Spec_{L^2(\RR^n/\Gamma)}(-\Delta_{\mathbf k}).$

We now fix $k_2$. If $\lambda$ is a real eigenvalue associated with a flat band, i.e., $\lambda \in \Spec_{L^2(\RR^n/\Gamma)}(H_{V,\mathbf k})$ for all $\mathbf{k}\in \RR^2/\Gamma^*$, then as the compact operator $(\Delta_{\mathbf k} +\lambda)^{-1}V$ is analytic in $k_1$, \cite[Ch\,7, Theo.\,$1.9$]{kato}, there is a non-empty open set
\[N:=\left\{k_1 \in \CC; 1 \in \Spec\Big( (\Delta_{\mathbf k} +\lambda)^{-1} V \Big)\right\}\subset \mathbb{C},\]
as $-\Delta_{\mathbf k}$ does not have flat bands at any energy $\lambda$.
We also define
\[ D:=\left\{k_1 \in \mathbb C,(\Delta_{\mathbf k} +\lambda)^{-1} \text{ is bounded}\right\}.\]
Since holomorphic dependence of the operator $(\Delta_{\mathbf k} +\lambda)^{-1} V$ in $k_1$ is valid for all $k_1 \in D$, we conclude that $ D\subset N$ by \cite[Ch.\,7, Theo.\,$1.9$]{kato} and the fact that $\lambda$ is the flat band energy. This inclusion however is impossible:

We note that for $\mathbf k = ( k_1, k_2)$
 $$- \Delta_{\mathbf k} - \lambda = (D_{x_1} +  k_1)^2 + (D_{x_2} +  k_2)^2 -\lambda.$$
 Thus, by complexifying $k_1 = \mu_1 + i \mu_2$ for $\mu \in \RR^2$, we find 
 \[\Vert (- \Delta_{\mathbf k} - \lambda)^{-1}\Vert = \mathcal O(1/\mu_2^2).\]
 The last estimate is easily verified using the Fourier representation of the differential operator. Indeed, upon conjugating by the Fourier series, the operator $(- \Delta_{\mathbf k} - \lambda)^{-1}$ on $L^2(\mathbb R^2/\Gamma)$ corresponds to multiplication by  
 \[\Big(\frac{1}{(n_1+k_1)^2 + (n_2 + k_2)^2 -\lambda}\Big)_{n \in \Gamma^*} \]
 on $\ell^2(\Gamma^*).$
Hence, choosing $\mu_2 \gg 1$ large enough, the inverses exists and \[\Vert ( \Delta_{\mathbf k} +\lambda)^{-1} V \Vert = \mathcal O(1/\mu_2^2).\] Thus, $1$ cannot be an eigenvalue of this operator. Therefore, $\lambda$ cannot be an eigenvalue associated with flat band.
\end{proof}

The semiclassical Weyl symbol of this Hamiltonian is 
\[ \sigma_h(H) = \begin{pmatrix} \vert \xi\vert^2  + \alpha V_{\uparrow}(x) + \frac{U}{2} & \beta T( x) \\ \beta T( x)^* & \vert \xi\vert^2+ \alpha V_{\downarrow}( x) - \frac{U}{2}  \end{pmatrix}\]

\subsection{Diagonalization of the TMD Hamiltonian}
Setting $\alpha=1$, we observe that the semiclassical symbol of $H$ is 
\[ \sigma_h(H)(x,\xi) = \begin{pmatrix} \vert\xi  \vert^2 + V_{\uparrow}(  x) + U/2 & \beta T( x) \\ \beta T( x)^* & \vert\xi  \vert^2  + V_{\downarrow}( x) -U/2\end{pmatrix}.\]
This matrix has eigenvalues
\begin{equation}
    \label{eq:eigenvalue}
    \begin{split} \lambda_{\pm}(x,\xi) = \vert \xi\vert^2 & + \lambda_{\pm}(V(x)) :=  \vert \xi\vert^2 + \frac{V_{\uparrow}(  x)  + V_{\downarrow}( x) }{2} \pm U_{\text{eff}}(x) \text{ with }\\
U_{\text{eff}}(x)&=  \frac{\sqrt{ (V_{\uparrow}( x) -V_{\downarrow}(x)+ U)^2+ \beta^2\vert T(x)\vert^2}}{2}.\end{split}
\end{equation}
In particular, expanding at $x=0$, we have
\begin{equation}
    \label{eq:potential}
    \lambda_{\pm}(V)= 6 \cos(\phi) \pm \frac{ \sqrt{9 \beta^2 + U^2} }{2} -8\pi^2 \vert x \vert^2 \cos(\phi) \mp
 \frac{2 }{3} \pi^2  \frac{\beta ^2 (x_1^2 + 9 x_2^2)}{\sqrt{9\beta^2 + U^2}} + \mathcal O( \vert x \vert^3)
\end{equation}
 which implies the existence of a non-degenerate potential well for $\lambda_-$ at least for $\phi \in (\pi/2,3\pi/2).$
\begin{figure}

\includegraphics[width=7.5cm]{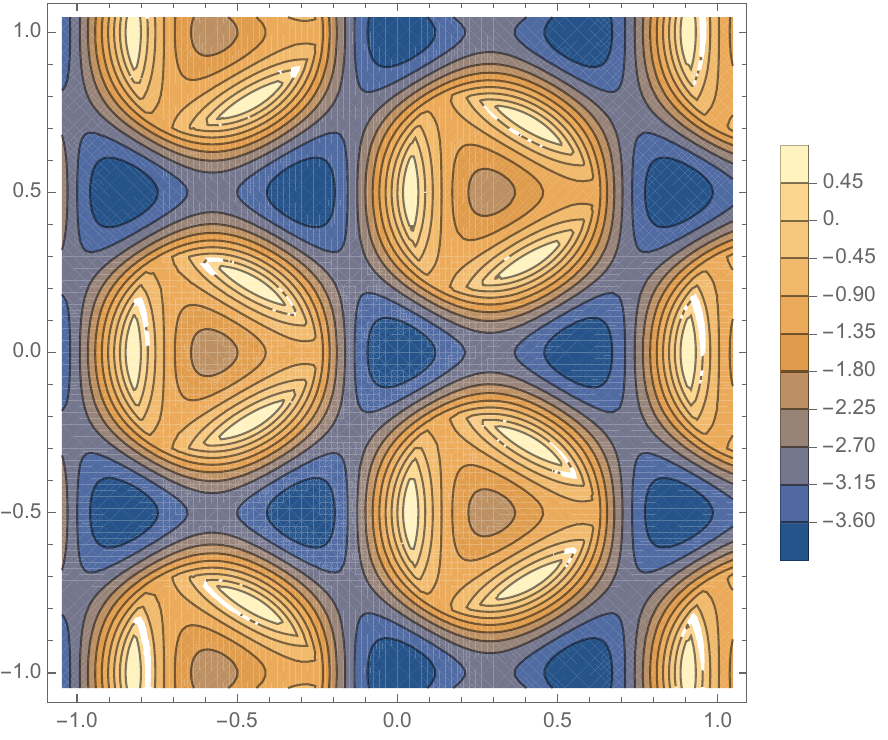}
\includegraphics[width=7.5cm]{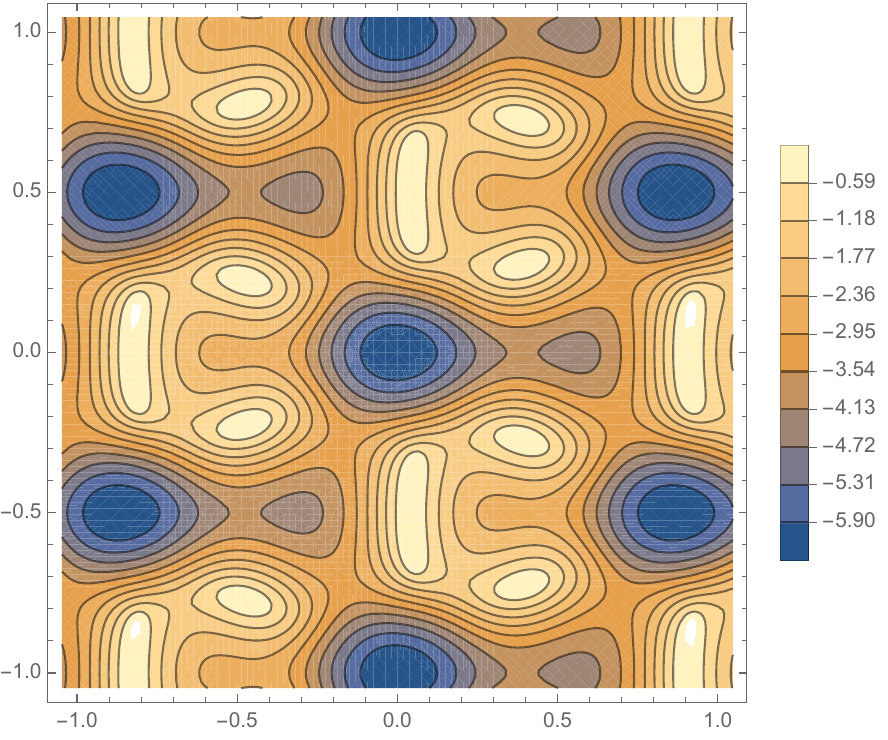}\\
\includegraphics[width=7.5cm]{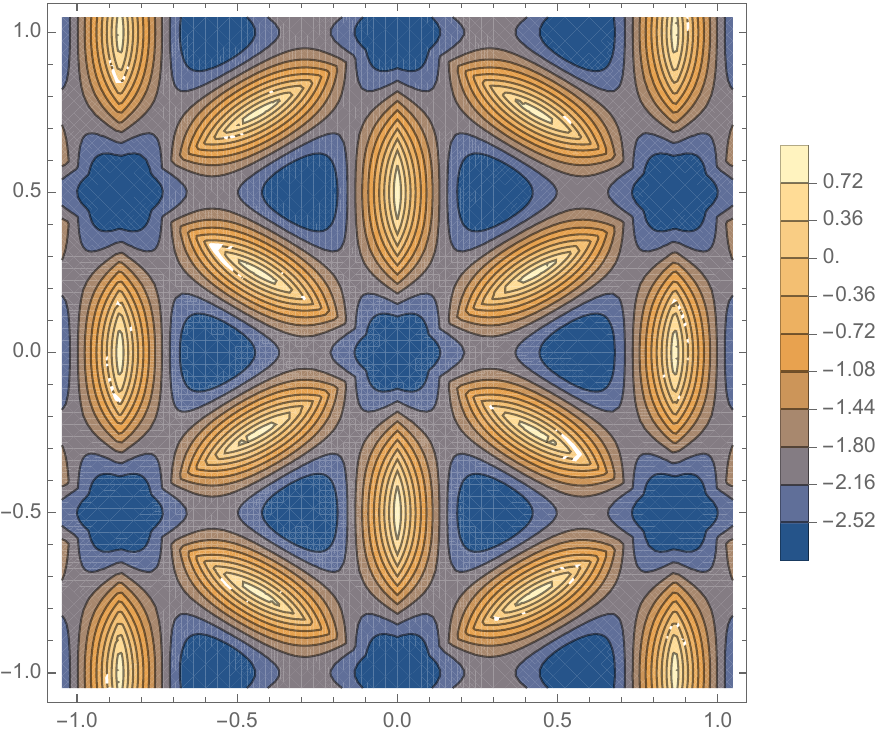}
\includegraphics[width=7.5cm]{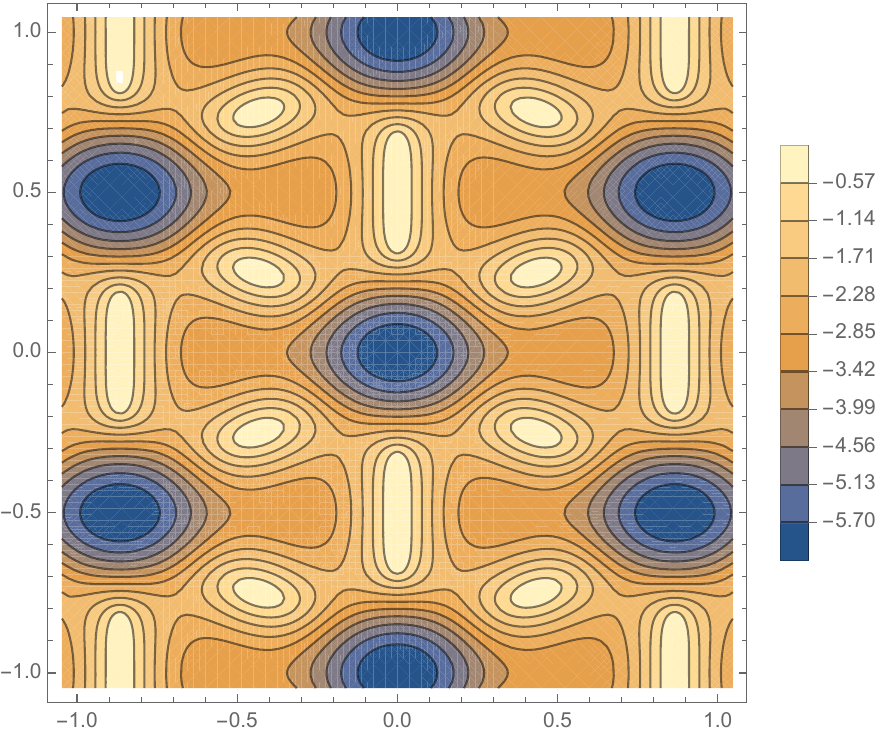}
\caption{ \label{fig:1}Contour plot of $x \mapsto \lambda_-(V(x))$ with $(U,\beta)=(2,0),(2,5),(0,5),(0,0)$ from top left (clock-wise) with $\phi=4\pi/3$. }
\end{figure}

\begin{figure}

\includegraphics[width=4.5cm]{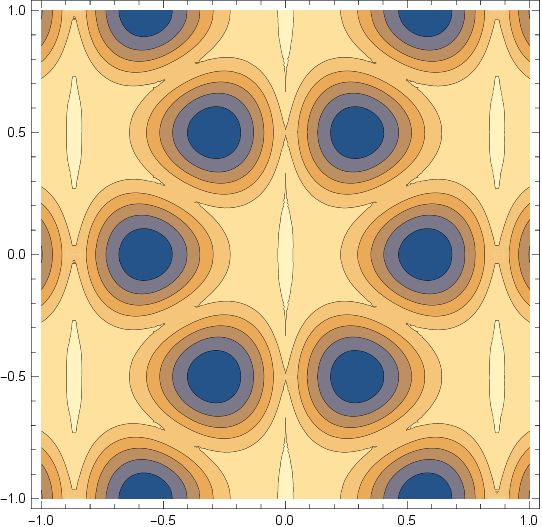}
\includegraphics[width=4.5cm]{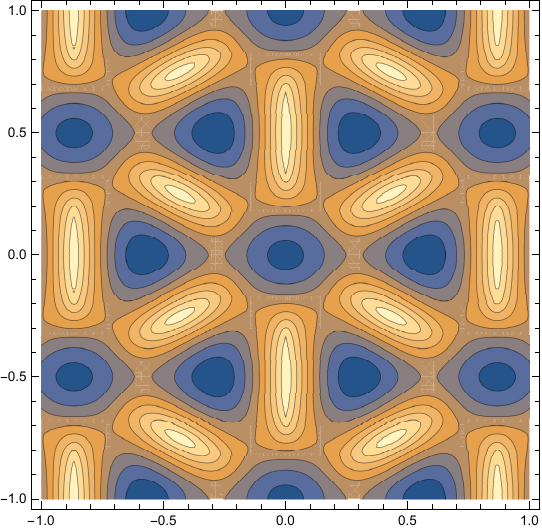}
\includegraphics[width=4.5cm]{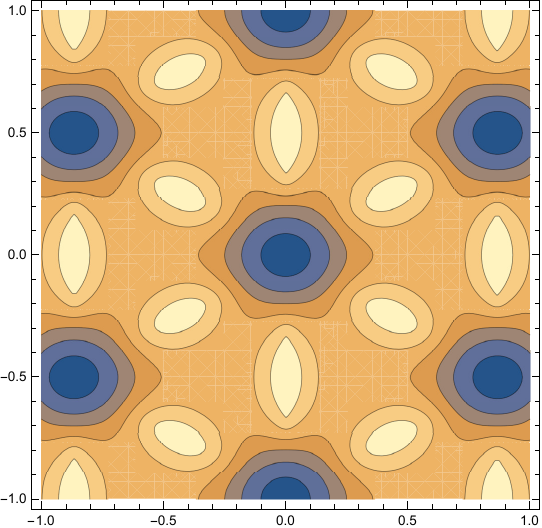}
\caption{\label{fig:2} Contour plot of $x \mapsto \lambda_-(V(x))$ with $U=0,\beta=1$ and $\phi=1.32,1.94,2.31$ from left to right exhibiting the features of a honeycomb, Kagome, and triangular lattice. }
\end{figure}

As Figures \ref{fig:1} and \ref{fig:2} shows, Assumption \ref{ass1} is not always satisfied in this model. Our work here crucially relies on this assumption and follow-up work will be devoted to studying other symmetry configurations. We do believe the band structure to change quite dramatically for these other configurations, see Figure \ref{fig:phi}. 
\section{Harmonic approximation}
\label{sec:harmony}
We consider the semiclassical TMD Hamiltonian $H=-h^2\Delta + V$ with Hermitian matrix $V \in C^{\infty}(\mathbb R^2/\Gamma; \CC^{2\times 2})$ and eigenvalues
$$\lambda_-(V(x)) <\lambda_+(V(x)),\quad x\in \mathbb R^2/\Gamma. $$ 
There exists a smooth unitary matrix $\mathcal U(x)$ such that 
\[ \mathcal U(x)^* V \mathcal U(x) = \operatorname{diag}(\lambda_+(V(x)), \lambda_-(V(x))),\]
where $\mathcal U(x)= \begin{pmatrix} v_+(x), v_-(x)\end{pmatrix}$ is a smooth orthonormal basis. 
We then find by conjugating the Hamiltonian $H$ by the same unitary
\begin{equation}
\label{eq:leading_order}
 \begin{split}
 \widetilde{H}: &= \mathcal U(x)^* H \mathcal U(x) = \mathcal U(x)^* (-h^2 \Delta ) \mathcal U(x) + \operatorname{diag}(\lambda_+(V), \lambda_-(V))\\
  &=-h^2 \Delta +  \operatorname{diag}(\lambda_+(V), \lambda_-(V)) -2h^2\sum_{i=1}^2\mathcal U(x)^* (\partial_{x_i}\mathcal U) \partial_{x_i} -h^2 \mathcal U(x)^*  (\Delta  \mathcal U)\\
  & =: -h^2 \Delta +  \operatorname{diag}(\lambda_+(V(x)), \lambda_-(V)) + R(h,x,hD_x)
  \end{split}
\end{equation}
where $R(h,x,hD_x)\in h\mathrm{Diff}_h^1(\mathbb R^2/\Gamma; \CC^{2\times2})$ is of the general form 
\[h(A_1(x)hD_{x_1} +A_2(x)hD_{x_2})+ h^2B(x)\text{ for }A_i,B \in L^{\infty}(\mathbb R^2;\mathbb C^{2\times 2}).\]
For notational convenience, we write $\lambda_{\pm}(x):= \lambda_{\pm}(V(x))$. We now show that the low lying spectrum on $L^2(\RR^2)$ of the harmonic oscillator 
\begin{equation}
    \label{eq:h-osc}
    H_{\text{osc}}:= -h^2\Delta + \operatorname{diag}(\langle x,V_+x\rangle,\langle x,V_-x\rangle) + \operatorname{diag}(\lambda_{+}(0),\lambda_{-}(0))
\end{equation}
with $V_{\pm} = \operatorname{diag}(\partial_{x_1}^2 \lambda_{\pm}(0),\partial_{x_2}^2 \lambda_{\pm}(0))$
and the corresponding spectrum of the perturbed operator
\begin{equation}
    \label{eq:h-pert}
    H_{\text{pert}}:= H_{\text{osc}} + R(h,x,hD_x) = H_{\text{osc}} + h \sum_{i=1}^2 A_i(x) hD_{x_i} + h^2 B(x)
\end{equation}
differs by $\mathcal{O}(h^{3/2})$. It is clear by general properties from semiclassical analysis that the term $h^2B$ affects the spectrum only at order $\mathcal O(h^2)$ and thus this contribution is subleading to the linear harmonic oscillator spacing and we can take $R(x) = h^2 \sum_{i=1}^2 A_i(x) D_{x_i}$. 

Let $\lambda \notin \Spec_{L^2(\RR^2)}(H_{\text{osc}}),$ then formally we have 
\[ (\lambda-H_{\text{pert}})^{-1} = (\operatorname{Id}-R(h,x,hD_x)(\lambda-H_{\text{osc}})^{-1})^{-1} (\lambda-H_{\text{osc}})^{-1}.\]
The Neumann series exists if and only if $S:=R(h,x,hD_x)(\lambda-H_{\text{osc}})^{-1}$ has operator norm $<1.$
We verify this in the following lemma
\begin{lemm}
\label{l:harmonic}
Let $\lambda_-(0)+rh$ be an eigenvalue of $H_{\text{osc}}$ with
\begin{equation}
    \label{eq:harmonic_eigen}
    r \in  \mathcal A:=\Big\{ \sum_{i=1}^2 (2m_i+1)\sqrt{\lambda_{i}(D^2\lambda_-(0))}; m\in \mathbb N_0^2\Big\}, 
\end{equation}
then for $h$ sufficiently small, depending on $r$, the eigenvalues of $H_{\text{pert}}$ remain with multiplicity inside the ball $B(\lambda_-(0)+rh,ch^{3/2})$ for a sufficiently large constant $c=c(r)$.
\end{lemm}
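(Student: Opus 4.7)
The plan is to execute the Neumann series comparison from the display preceding the statement, then upgrade it to the sharper $h^{3/2}$-rate by a quasimode testing argument. Set $E_0 := \lambda_-(0)+rh$. Since $H_{\mathrm{osc}}$ is block-diagonal in the $\pm$ decomposition, separation of variables gives its spectrum as the union of $\lambda_\pm(0)+h\sum_{i=1}^{2}(2m_i+1)\sqrt{\lambda_i(V_\pm)}$ for $m\in\NN_0^2$, with Hermite-tensor eigenfunctions concentrated on the scale $\sqrt h$. Because $\lambda_-(0)<\lambda_+(0)$, the lower band sits a constant below the upper for small $h$, while intra-band spacing is $\sim h$; I pick a contour $\gamma$ of radius $\rho h$ (with $\rho=\rho(r)>0$ small) around $E_0$ enclosing only $E_0$ with its multiplicity.

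The heart of the proof is the bound $\|R(H_{\mathrm{osc}}-\lambda)^{-1}\|=\mathcal O(h^{1/2})$ on $\gamma$. The spectral theorem gives $\|(H_{\mathrm{osc}}-\lambda)^{-1}\|\le (\rho h)^{-1}$. The crucial extra input is the half-power gain
\[ \|hD_{x_i}(H_{\mathrm{osc}}-\lambda)^{-1}\|=\mathcal O(h^{-1/2}), \]
which I derive from the quadratic form identity $\|hDu\|^2 = \langle H_{\mathrm{osc}}u,u\rangle - \langle \operatorname{diag}(\lambda_+(0)+\langle x,V_+x\rangle,\lambda_-(0)+\langle x,V_-x\rangle)u,u\rangle$ and the lower bound $\operatorname{diag}(\cdot)\ge \lambda_-(0)$: substituting $u=(H_{\mathrm{osc}}-\lambda)^{-1}v$ and using $\lambda-\lambda_-(0)=\mathcal O(h)$ on $\gamma$ yields $\|hDu\|^2=\mathcal O(\|v\|^2/h)$. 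Combining with the $L^\infty$ bounds on $A_i,B$ gives $\|R(H_{\mathrm{osc}}-\lambda)^{-1}\|\le h\|A_i\|_\infty\cdot\mathcal O(h^{-1/2})+h^2\|B\|_\infty\cdot \mathcal O(h^{-1})=\mathcal O(h^{1/2})$. A Neumann series then yields $(H_{\mathrm{pert}}-\lambda)^{-1}$ on $\gamma$ with norm $\mathcal O(h^{-1})$, and by the resolvent identity the Riesz projectors $\Pi_0,\pi_0$ of $H_{\mathrm{osc}},H_{\mathrm{pert}}$ over $\gamma$ satisfy
\[ \|\pi_0-\Pi_0\| \le |\gamma|\cdot \mathcal O(h^{-1})\cdot \mathcal O(h^{1/2}) = \mathcal O(h^{1/2}). \]
In particular $\operatorname{rank}(\pi_0)=\operatorname{rank}(\Pi_0)$, so $H_{\mathrm{pert}}$ has the correct multiplicity of eigenvalues inside $D(E_0,\rho h)$.

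To sharpen from $\rho h$ to $ch^{3/2}$, for each eigenpair $(\tilde E,\tilde\psi)$ with $\tilde E\in D(E_0,\rho h)$ I test against $\phi:=\Pi_0\tilde\psi$, which satisfies $H_{\mathrm{osc}}\phi=E_0\phi$. Self-adjointness gives $(E_0-\tilde E)\|\phi\|^2=-\langle R\tilde\psi,\phi\rangle$. Applying the same quadratic form identity to $\tilde\psi$ with $\langle H_{\mathrm{osc}}\tilde\psi,\tilde\psi\rangle=\tilde E\|\tilde\psi\|^2-\langle R\tilde\psi,\tilde\psi\rangle$ and the crude bound $|\langle R\tilde\psi,\tilde\psi\rangle|\le h\|A_i\|_\infty\|hD\tilde\psi\|+h^2\|B\|_\infty$ yields, after solving a quadratic inequality, the a priori estimate $\|hD\tilde\psi\|=\mathcal O(h^{1/2})$; the explicit Hermite structure of $\phi$ likewise gives $\|hD\phi\|=\mathcal O(h^{1/2})\|\phi\|$. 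Hence $\|R\tilde\psi\|=\mathcal O(h^{3/2})$. Since $\|\pi_0-\Pi_0\|=\mathcal O(h^{1/2})$ forces $\|\phi\|\ge \tfrac12\|\tilde\psi\|$ for $h$ small, we conclude $|E_0-\tilde E|\le \|R\tilde\psi\|/\|\phi\|=\mathcal O(h^{3/2})$, which is the assertion.

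The main obstacle is that $R$ is unbounded on $L^2$ because of the first-order piece $hA_i hD_{x_i}$, so ordinary operator-norm perturbation theory is unavailable; naively one would only get $\mathcal O(h)$ shifts, which is too crude to beat the harmonic-oscillator spacing. The decisive ingredient is the half-power gain $\|hD(H_{\mathrm{osc}}-\lambda)^{-1}\|=\mathcal O(h^{-1/2})$, which expresses the well-known heuristic that eigenfunctions localised at energy $\lambda_-(0)+\mathcal O(h)$ live on the scale $\sqrt h$ and therefore $hD$ acts on them like $\sqrt h$. Keeping careful track of this factor at both the resolvent and the quasimode steps is what converts the $\mathcal O(h)$ naive bound into the desired $\mathcal O(h^{3/2})$.
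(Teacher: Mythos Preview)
Your argument is correct, but it differs from the paper's in two places worth noting.

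For the key half-power gain, the paper computes $T_i:=h^2D_{x_i}(\lambda-H_{\mathrm{osc}})^{-1}$ explicitly in the Hermite basis via creation/annihilation operators, obtaining the diagonal operator $M_-=T_-^*T_-$ with entries $i_1h^3/(\lambda-h(i_1+i_2+1))^2$. This yields the sharper pointwise estimate $\|S\|^2\lesssim (r-1)h^3/|\lambda-E_0|^2$, so that $\|S\|<1$ holds on the entire annulus $B(E_0,h/2)\setminus B(E_0,ch^{3/2})$ once $c=c(r)$ is large. Hence the Neumann series converges on that annulus and $H_{\mathrm{pert}}$ has no spectrum there; the $h^{3/2}$ localisation falls out in a single step, and the projector comparison is then carried out on the larger circle $\partial B(E_0,h/2)$ to match multiplicities. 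Your quadratic-form route $\|hDu\|^2=\langle H_{\mathrm{osc}}u,u\rangle-\langle\mathrm{diag}(\cdot)u,u\rangle$ gives only $\|S\|=\mathcal O(h^{1/2})$ on a contour of fixed radius $\rho h$, which is why you need the separate quasimode step $(E_0-\tilde E)\|\phi\|^2=-\langle R\tilde\psi,\phi\rangle$ together with the a~priori bound $\|hD\tilde\psi\|=\mathcal O(h^{1/2})$ to upgrade to $h^{3/2}$.

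Both routes are valid. The paper's explicit ladder-operator computation is shorter and gets the sharp radius in one stroke, but is tied to the exact harmonic oscillator. Your energy-method bound and quasimode testing are more robust: they avoid any explicit spectral decomposition and would transfer unchanged to a wider class of reference operators with nondegenerate quadratic minima.
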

\begin{proof}
By the triangle inequality 
\[ \Vert S \Vert \le 2 \max_i \Vert A_i \Vert\Vert T_i\Vert, \]
where 
\[T_i := h^2D_{x_i}(\lambda-H_{\text{osc}})^{-1}.\]

Without loss of generality, we give the proof for the standard harmonic oscillator and focus on $D_{x_1}$

\[H_{\text{osc}}= \sum_{i=1}^2 \frac{p_i^2}{2} +\frac{x_i^2}{2}\]
with $p_i=hD_{x_i}.$

The spectral decomposition then implies that, for $(\vert i \rangle )_{ i \in \mathbb N_0^2}$ the harmonic oscillator basis,
\[ (\lambda-H_{\text{osc}})^{-1}= \sum_{i_1,i_2=0}^{\infty} \frac{\vert i_1,i_2 \rangle \langle i_1,i_2 \vert}{(\lambda-h(i_1+i_2+1))}.\]

The differential $hD_{x_1}$ has the representation 
\[ hD_{x_1} = i \sqrt{\frac{h}{2}}(a_1^*-a_1).\]
using standard creation and annihilation operators that satisfy with respect to the harmonic oscillator basis
\[ \begin{split}
      a_1^*|i\rangle = \sqrt{i_1 + 1} | i_1 + 1,i_2\rangle \text{ and } 
          a|i\rangle = \sqrt{i_1} | i_1 - 1,i_2\rangle.
\end{split}\]
Thus, writing $i=(i_1,i_2)$ we find 
\[h^2D_{x_i}(\lambda-H_{\text{osc}})^{-1} = h^{3/2}  \sum_{i_1,i_2=0}^{\infty}\frac{\sqrt{i_1+1}\vert i+\hat{e}_1 \rangle \langle i \vert  - \sqrt{i_1}\vert i-\hat{e}_1 \rangle \langle i \vert }{\lambda-h(i_1+i_2+1)}.\]

We have thus decomposed the operator $T_1 = T_+ -T_-$ into two summands 
\[ T_+:=h^{3/2}  \sum_{i_1,i_2=0}^{\infty}\frac{\sqrt{i_1+1}\vert i+\hat e_1 \rangle \langle i \vert}{\lambda-h(i_1+i_2+1)} \text{ and }T_-:=h^{3/2}  \sum_{i_1,i_2=0}^{\infty}\frac{\sqrt{i_1}\vert i-\hat e_1 \rangle \langle i \vert}{\lambda-h(i_1+i_2+1)}.\]

We also introduce 
\[M_-:=T_-^*T_- = h^3 \sum_{i_1,i_2=0}^{\infty} \frac{i_1 \vert i \rangle \langle i \vert}{(\lambda-h(i_1+i_2+1))^2}.\]
Therefore $M_-$ is a diagonal operator. The operator norms are 
\[ \Vert T_- \Vert^2 = \Vert M_- \Vert =\max_{i} \frac{i_1h^3}{(\lambda - h(i_1+i_2+1))^2}. \]
Now, for $r\in\mathcal{A}$ fixed, we choose $\lambda \in B(\lambda_-(0)+rh,h/2)$, then 
\[ \Vert M_- \Vert = \frac{(r-1)h^3}{\vert \lambda-rh\vert^2}. \]
We thus find that as long as $\lambda \in B(\lambda_-(0)+rh,h/2) \setminus B(\lambda_-(0)+rh,ch^{3/2})$ with $c$ a sufficiently large constant, that depends on $r$, we have 
\[\Vert M_-\Vert \ll 1.\]
Thus, doing all analogous computations for all the other operators, we find that $\Vert S \Vert <1$ which implies the claim.

In particular, this shows that 
\[(\lambda-H_{\operatorname{pert}})^{-1} = (\operatorname{Id}-\mathcal O(\sqrt{h}))^{-1} (\lambda-H_{\operatorname{osc}})^{-1}.\]
Thus, we conclude that the range of the spectral projections associated with eigenvalues inside $B(\lambda_-(0)+rh,h/2)$ have the same dimension by showing that the norm of the spectral projections is smaller than one:
\[\begin{split} &\left \Vert \frac{1}{2\pi i} \int_{\partial B(\lambda_-(0)+rh,h/2)} (\lambda-H_{\operatorname{pert}})^{-1}-(\lambda-H_{\operatorname{osc}})^{-1} d \lambda \right\Vert  \\
&\le \frac{1}{2\pi }\left \lvert \int_{\partial B(\lambda_-(0)+rh,h/2)} \mathcal O(\sqrt{h})\left \lVert (\lambda-H_{\operatorname{osc}})^{-1} \right\rVert \ d\lambda \right \rvert  \\
&\le \frac{1}{2\pi }\left \lvert \int_{\partial B(\lambda_-(0)+rh,h/2)} \mathcal O(\sqrt{h}) \mathcal O(1/h) \ d\lambda \right \rvert= \mathcal O(\sqrt{h}) \ll 1,
\end{split}\]
which implies the spectral stability.
\end{proof}

Since the TMD Hamiltonian is periodic, recall that in Section \ref{sec:Hamiltonian}, we have 
\[ \Spec_{L^2(\mathbb R^2)}(H) = \bigcup_{\mathbf k \in \R^2/\Gamma^*} \Spec_{L^2(\mathbb R^2/\Gamma)}(H_{\mathbf k})\]
with 
\begin{equation}
\label{eq:HK}
H_{\mathbf k} = h^2(D_x+\mathbf k)^2 +V.
\end{equation}
Note that eigenvalues of the semiclassical principal symbol $\sigma_h(H_{\bf k})$ are given by $ \lambda_{-,\mathbf k}(x,\xi) < \lambda_{+,\mathbf k}(x,\xi) $ with 
\[\begin{split} \lambda_{\pm,\mathbf k}(x,\xi) &=  \vert \xi+\mathbf k\vert^2 + \lambda_{\pm}(V(x)),
\end{split}\]
where $\lambda_{\pm}(V(x))$ is defined in \eqref{eq:eigenvalue}, are gapped with associated (real-analytic) eigenvectors 
\[\mathcal U(x) = (v_+(x),v_-(x)), \quad v_{\pm}(x) = \Big(\frac{V_{\uparrow}( x)  - V_{\downarrow}(  x) +U}{2} \pm U_{\text{eff}}( x),\beta T( x)^*\Big)^t.\]

We can then use a cut-off function $\chi \in C_c^{\infty}(\RR^2)$ with $\chi(x)=1$ for $\vert x \vert \le \delta$ and $\chi(x)=0$ for $\vert x \vert \ge 2\delta$ to define a new single-well operator
\begin{equation}
    \label{eq:single-well}
\tilde H_{\text{well}}:= \tilde H + (1-\chi(x)), \quad \tilde H = \mathcal U(x)^* H(x) \mathcal U(x).
\end{equation}
We study the spectrum of this operator using the harmonic approximation of the potential in the following subsection.

 \begin{figure}
 \includegraphics[width=7cm]{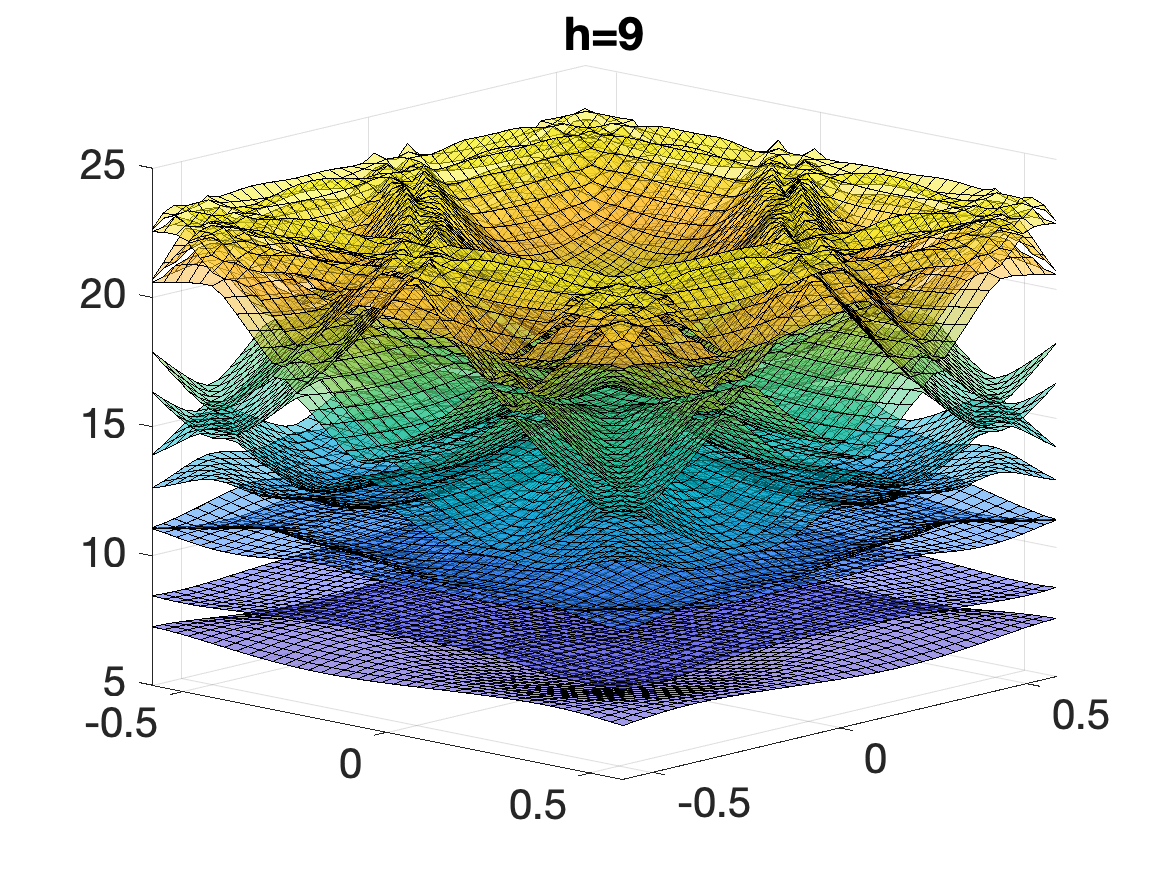} \includegraphics[width=7cm]{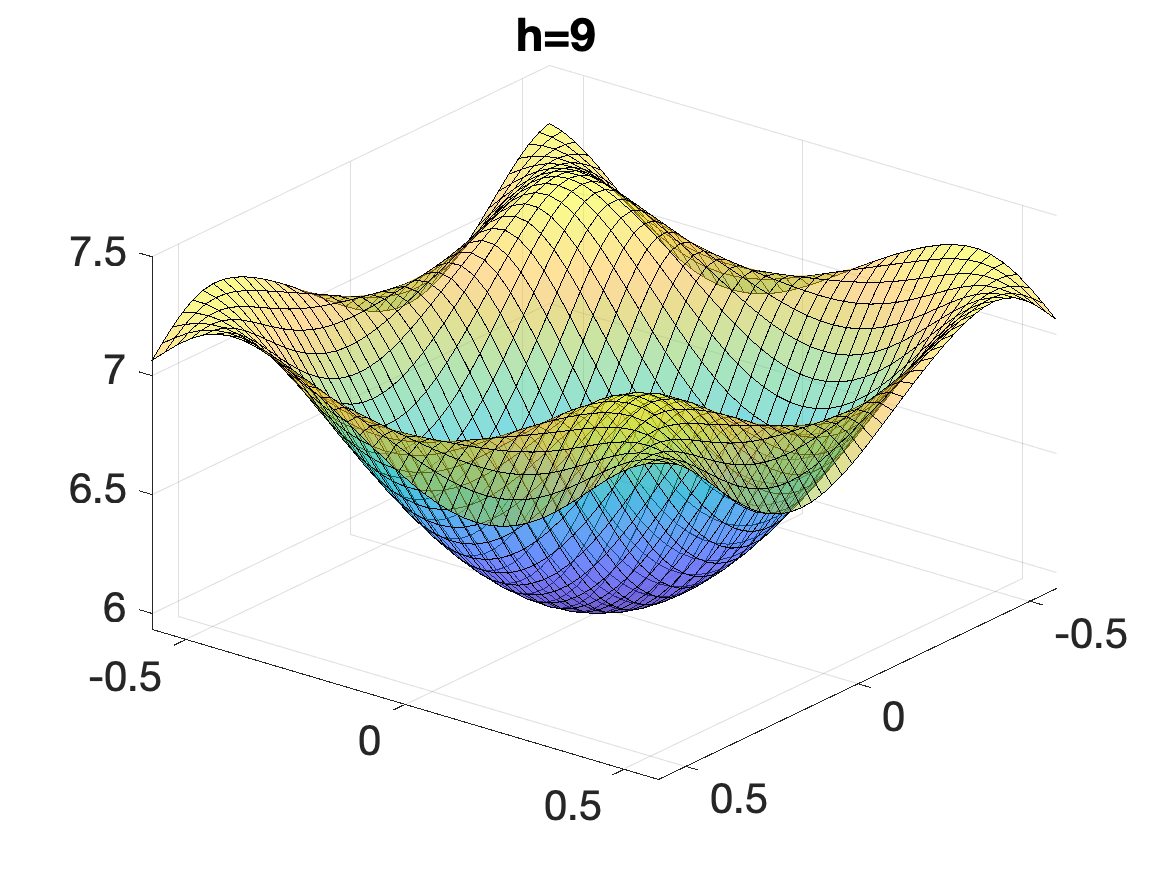}
  \includegraphics[width=7cm]{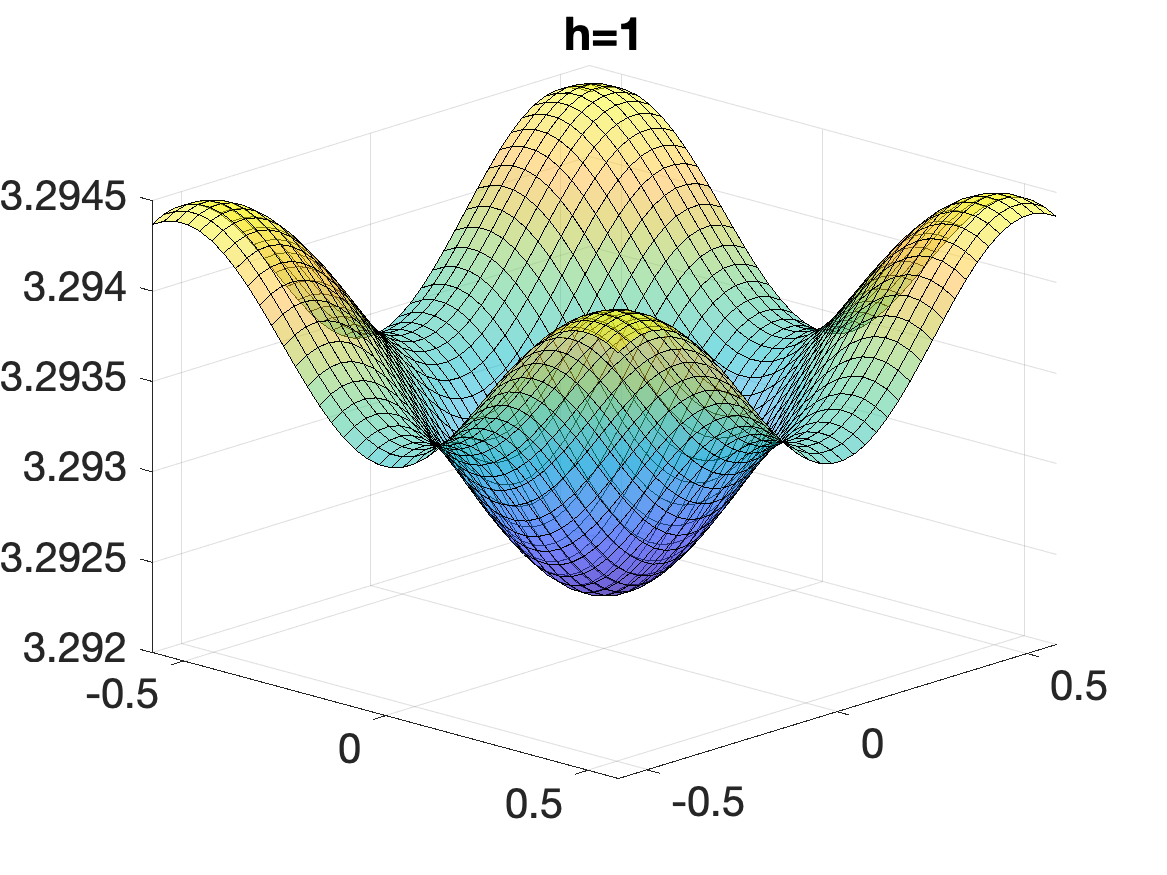} \includegraphics[width=7cm]{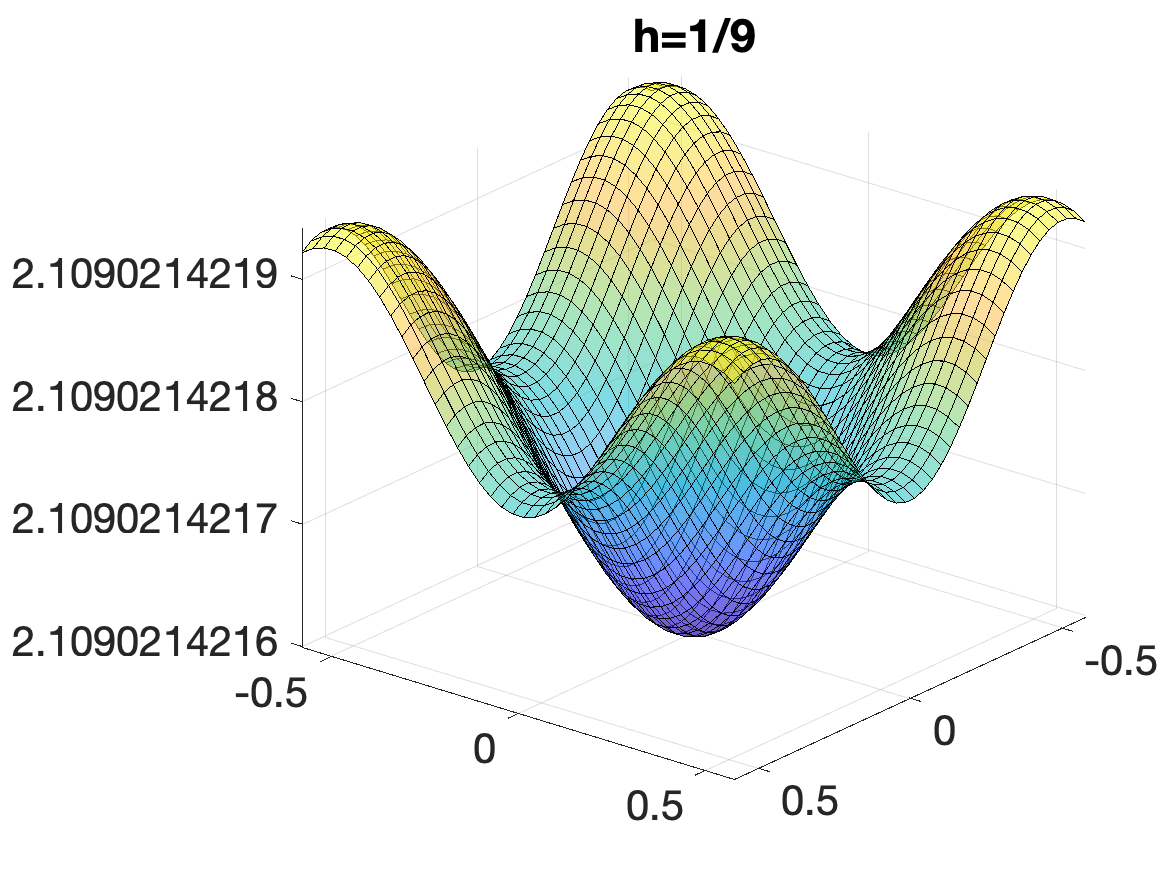}
  \caption{Different $h$, $\alpha=\beta=1$. All lowest bands for $h=9$ and lowest band for different parameters of $h$ showing exponential flattening.}
 \end{figure}
 
  \begin{figure}
 \includegraphics[width=7cm]{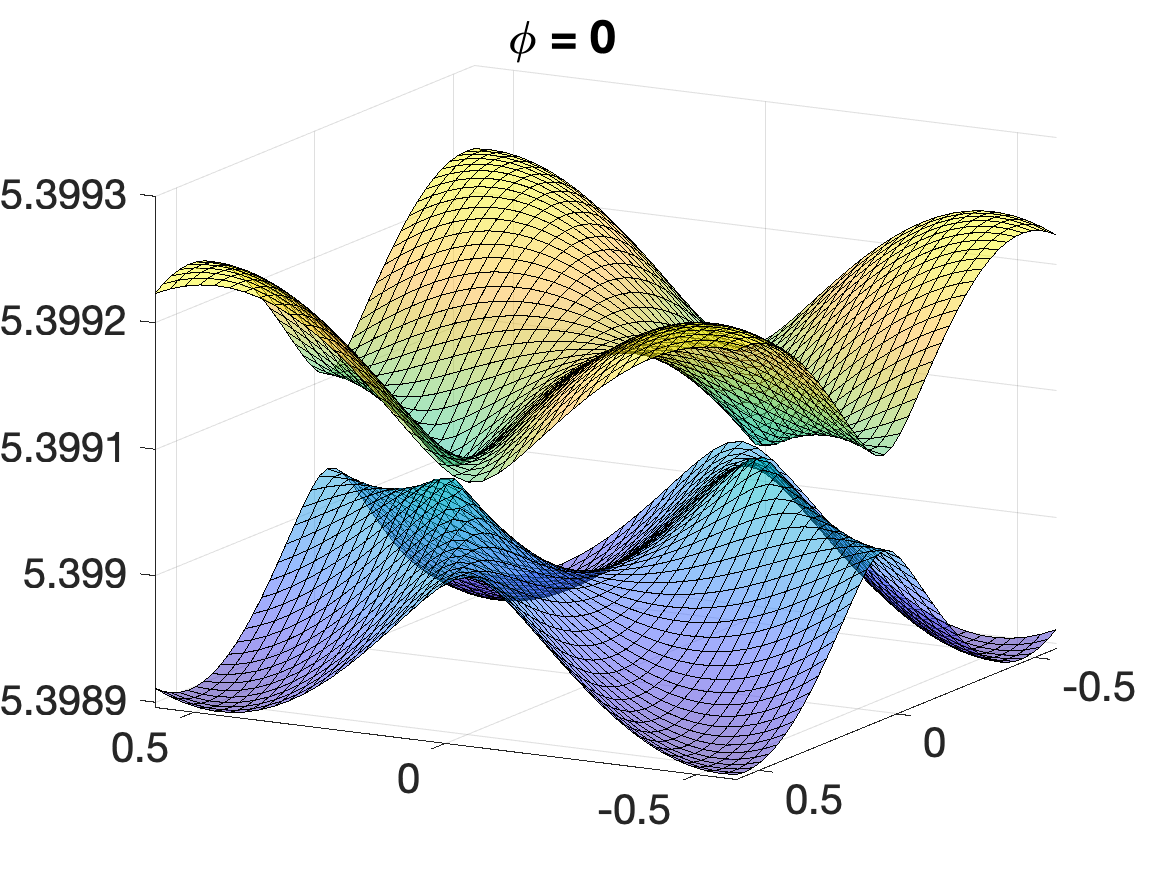} \includegraphics[width=7cm]{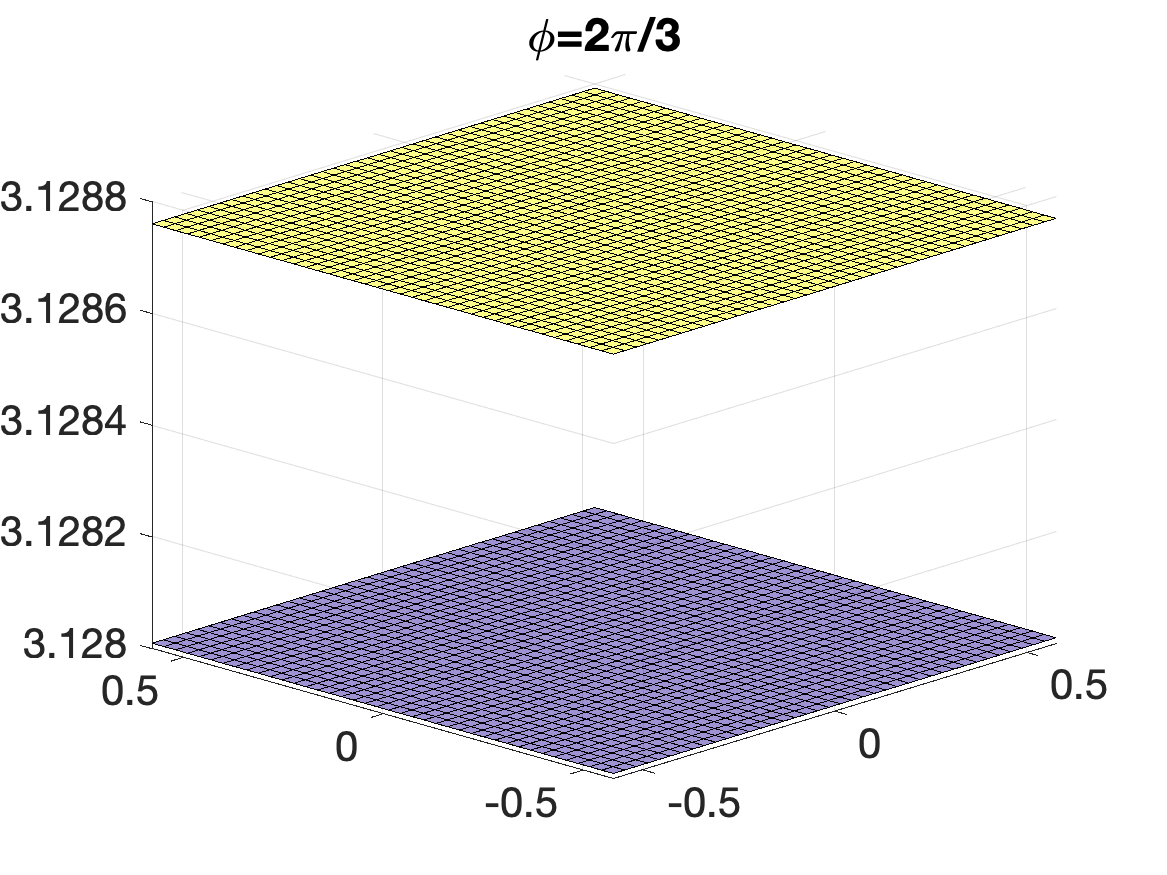}
  \caption{  \label{fig:phi} Lowest two bands for $h=1/9$, $\alpha=1$ and $\beta=1/9$ with $\phi=0$ (left) almost exhibiting Dirac points and $\phi = 2\pi/3$ (right).}
 \end{figure}

\subsection{Harmonic approximation with remainder terms}
We want to show that the low lying spectrum of $\tilde{H}_{\text{well}}: H^2(\RR^2)\to L^2(\RR^2)$ as well as $\tilde{H}_{\bf k}:  H^2(\RR^2/\Gamma)\to L^2(\RR^2/\Gamma)$ defined in equation \eqref{eq:leading_order} can be approximated by the spectrum of $H_{\text{pert}}$ defined in \eqref{eq:h-pert}. 
We can therefore approximate the spectrum of $\tilde{H}_{\text{well}}$ and $\tilde{H}_{\mathbf{k}}$ by the spectrum of harmonic oscillator $H_{\text{osc}}$ using Lemma \ref{l:harmonic}.
This idea leads to the proof of Theorem \ref{theo:harmonic_approximation}.

\begin{proof}[Proof of Theo.\,\ref{theo:harmonic_approximation}]

We start by proving the first part and then sketch the modifications for the second part at the end of the proof. 
We approximate the potential $\mathrm{diag}(\lambda_{+}(V), \lambda_{-}(V))$ in the Hamiltonian $\widetilde{H}$ using harmonic oscillator potentials near $x=0$. Note that by equation \eqref{eq:potential}, the minimum of the lowest eigenvalue $\lambda_{\pm}(x)$ is attained at $\lambda_{\pm}(0)=6 \cos(\phi) \pm \frac{ \sqrt{9 \beta^2 + U^2} }{2}.$
Let $V_{\pm} = \operatorname{diag}(\lambda_{1,\pm},\lambda_{2,\pm})$ where 
\[
\begin{split}\lambda_{1,\pm}&=\frac{\partial^2 \lambda_{\pm}}{\partial x_1^2}(0)=-8\pi^2  \cos(\phi) \mp
 \frac{2 }{3} \pi^2  \frac{\beta ^2}{\sqrt{9\beta^2 + U^2}},  \\
 \lambda_{2,\pm}&=\frac{\partial^2 \lambda_{\pm}}{\partial x_2^2}(0)=-8\pi^2 \cos(\phi) \mp
  \pi^2  \frac{6\beta ^2  }{\sqrt{9\beta^2 + U^2}} .\end{split}
\]
Therefore, we consider the reference operator that is harmonic with the same remainder term as in equation \eqref{eq:leading_order}
\[K_{R}(x,D_x) :=- h^2\Delta + \operatorname{diag}(\langle x,V_+x\rangle,\langle x,V_-x\rangle) + \operatorname{diag}(\lambda_{+}(0),\lambda_{-}(0)) + R(h,x,hD_x).\]
Let $J(x):=\rho(h^{-2/5}x)$ with $\rho \in C_c^{\infty}(\mathbb R^2)$ with $0\le \rho \le 1$ where $\rho(x)=1$ for $\vert x\vert \le 1$ and $\rho(x)=0$ for $\vert x \vert \ge 2.$ 
Recall that 
$$\tilde{H}_{\text{well}} = \operatorname{diag}(\lambda_{+}(x,hD_x), \lambda_{-}(x,hD_x)) + R(h,x,hD_x) + (1-\chi).$$ 
We then have
\begin{equation}
\label{eq:diff1}
 \Vert J (\tilde{H}_{\text{well}} -K_{R} )J \Vert = \mathcal O(h^{6/5}).
 \end{equation}
which follows from the harmonic approximation (cf.~\cite[Equation (11.5)]{CFKS}) as the remainder terms cancel off.

Note that the eigenvectors associated with the low-lying spectrum of the operator $K_{0}$ are for $n \in \mathbb N_0^2$ of the form 
\[ \varphi_{n}(x) = \frac{\left(\frac{\sqrt{\lambda_1\lambda_2}}{h^2\pi^2 }\right)^{1/4}e^{
- \Big(\frac{\sqrt{\lambda_1} x_1^2+ \sqrt{\lambda_2}x_2^2}{2h}\Big)}}{2^{n_1+n_2}\,(n_1)!(n_2)!}     H_{n_1}\left(\sqrt{\frac{\lambda_1}{h}} x_1 \right)H_{n_2}\left(\sqrt{\frac{\lambda_2}{h}} x_2 \right) \hat{e}_2,\]
where $\hat{e}_2 =(0,1)^t$ and $H_n$ the $n$-th Hermite polynomial.

We also define $J_0:=\sqrt{1-J(x)^2}$ and $\psi_{n} := J\varphi_{n}$. We then have
\begin{gather}
    \label{eq:aux-1}
    \langle \psi_{n}, \psi_{m}  \rangle = \delta_{m,n} + \mathcal{O}(e^{-C/h^{1/5}})\\
    \label{eq:aux-2}
    \langle \varphi_{n}, (D_xJ_a(h))^2 \varphi_{n}  \rangle = \mathcal{O}(e^{-C/h^{1/5}}),\ J_a\in\{J_0,J\}.
\end{gather}
By equation \eqref{eq:diff1}, we have
\begin{equation}
    \begin{split}
    \label{eq:aux-5}
        \langle \psi_{n}, \tilde{H}_{\text{well}}\psi_{m}  \rangle 
        & = \langle J \varphi_{n}, \tilde{H}_{\text{well}}(J \varphi_{m}) \rangle \\
        & = \langle J \varphi_{n}, K_{R} (J \varphi_{m}) \rangle + \mathcal{O}(h^{6/5}).
    \end{split}
\end{equation}
Using equation \eqref{eq:aux-1} and \eqref{eq:aux-2}, we obtain that
\begin{equation}
\label{eq:aux-3}
    \begin{split}
        \langle J \varphi_{n}, K_{R} (J(h) \varphi_{m}) \rangle 
        & = \langle J \varphi_{n}, K_{0} (J \varphi_{m}) \rangle + \langle J \varphi_{n}, R(h,x,hD_x) (J \varphi_{m}) \rangle \\
        & = (h e_n+\lambda_-(0))\delta_{m,n} + \mathcal{O}(h^{3/2}),
    \end{split}
\end{equation}
where $h e_n+\lambda_-(0)$ is the corresponding eigenvalue of $\varphi_n(x)$ with $e_n\in\mathcal{A}$ defined in \eqref{eq:harmonic_eigen}. Combining \eqref{eq:aux-3} with equation \eqref{eq:aux-5} and using Lemma \ref{l:harmonic}, we obtain
\begin{equation}
    \label{eq:aux-6}
    \langle \psi_{n}, \tilde{H}_{\text{well}}\psi_{m}  \rangle = E'_n\delta_{m,n} + \mathcal{O}(h^{6/5}).
\end{equation}
Using Rayleigh--Ritz principle we have the upper bound for $E_{n}(h)$:
\begin{equation}
    \label{eq:ub}
    \limsup_{h\downarrow 0} \frac{E_n - E'_n}{h} \leq 0.
\end{equation}

Now we look for a lower bound for eigenvalues $E_{n}(h)$ of $\tilde{H}_{\text{well}}$. Using the IMS formula
\begin{equation}
\label{eq:IMS-formula}
\begin{split}
\tilde{H}_{\text{well}} = J_0 \tilde{H}_{\text{well}} J_0 + J \tilde{H}_{\text{well}} J - h^2 (\vert D_xJ_0\vert^2 + \vert D_x J \vert^2)
\end{split} 
\end{equation}
and equation \eqref{eq:diff1}, we have 
\begin{equation}
\label{eq:IMS}
\begin{split}
\tilde{H}_{\text{well}}
&=J_0 \tilde{H}_{\text{well}} J_0 +J\tilde{H}_{\text{well}}J+ \mathcal O(h^{6/5})\\
&=J_0\tilde{H}_{\text{well}} J_0 + JK_{R} J + \mathcal O(h^{6/5}),
\end{split} 
\end{equation}
whereas the IMS formula \eqref{eq:IMS-formula} follows from summing over the double commutators
\[  - 2 h^2 \vert D_x J_a \vert^2 = [ J_a,[J_a, \tilde{H}_{\text{well}}]] = J_a^2 \tilde{H}_{\text{well}} + \tilde{H}_{\text{well}} J_a^2 - 2 J_a \tilde{H}_{\text{well}} J_a, \ J_a \in \{J_0,J\}.  \]
For the first term on the RHS of the equation \eqref{eq:IMS}, by the support property of $J_0$, we have $J_0 \langle x,V_{\pm }x\rangle J_0 \geq Ch^{4/5}J_0^2$, where we use $A\geq B$ to denote that $A-B$ is a positive operator. 
Recall that we have $E'_n(h) = \lambda_-(0)+ h e_n +\mathcal{O}(h^{3/2})$ by Lemma \ref{l:harmonic}. Thus, for $h>0$ small enough and $\lambda_-(0)+ he = E' \in (E'_{n}(h), E'_{n+1}(h))$ with $n$ arbitrary, we have
\[J_0\tilde{H}_{\text{well}}J_0 \geq C h^{4/5} J_0^2 \geq h e J_0^2 = E' J_0^2 ,\]
where we use the fact that $-h^2\Delta$ is a positive operator, 
and that the remainder term $R(x)=\mathcal{O}(h)$ can be absorbed into the main term. 
For the second term on the RHS of the equation \eqref{eq:IMS}, as $E'_n<E'<E'_{n+1}$, we have that
\begin{equation}
    \label{eq:aux-4}
    J K_{ R}J \ge J F_n J + E' J^2,
\end{equation}
where $F_n$ is a rank $n$ operator. Thus, using equation \eqref{eq:IMS} and \eqref{eq:aux-4} we obtain
\begin{equation}
\label{eq:IMS1}
\begin{split}
\tilde{H}_{\text{well}}
&= J_0 \tilde{H}_{\text{well}} J_0 + JK_{R} J + \mathcal O(h^{6/5}) \\ 
&\ge E' (J_0^2+J^2) +  J (K_{R}-E') J +  \mathcal O(h^{6/5})\\
& = E' +  J F_n J +  \mathcal O(h^{6/5}).
\end{split} 
\end{equation}
As $E'$ can be taken arbitrarily close to $E_{n+1}$, equation \eqref{eq:IMS1} yields that the lower bound for the eigenvalue $E_{n+1}(h)$ of $\tilde{H}_{\text{well}}$:
\begin{equation}
\label{eq:lb}
\liminf_{h\downarrow 0} \frac{E_{n+1} - E'_{n+1}}{h} \geq 0.
\end{equation}
The first part of the theorem follows from combining equation \eqref{eq:ub} and \eqref{eq:lb}.

For the second part, we define the periodized versions of $J$ and $J_0$: $j(x):=\sum_{\gamma \in \Gamma} J(x+\gamma)$ and $j_0(x):=\sqrt{1-j(x)^2}.$

One then finds as in \eqref{eq:diff1} that 
\[ \Vert j \tilde H_{\mathbf k} j - k_R(x,h(D_x+\mathbf k)) \Vert_{L^2(\mathbb R^2/\Gamma)} = \mathcal O(h^{6/5}),\]
where we defined
$k_R(x,h(D_x+\mathbf k)):=\sum_{\gamma \in \Gamma} J(x+\gamma) K_R(x+\gamma,h(D_x+\mathbf k)) J(x+\gamma).$

In the periodic case, one then uses periodic Bloch functions as trial functions replacing $\varphi_n$ and $\psi_n$ given by
\[\begin{split}
\Phi_n(\mathbf k,x)&=\sum_{\gamma \in \Gamma} e^{-i\mathbf k(\gamma+x)}\varphi_n(x+\gamma) \text{ and }\\    
\Psi_n(\mathbf k,x)&=\sum_{\gamma \in \Gamma} e^{-i\mathbf k(\gamma + x)}J(x+\gamma)\varphi_n(x+\gamma) .
\end{split}\]
Then one readily shows the analogue of \eqref{eq:ub}. To get also the opposite bound \eqref{eq:lb} one uses the analogue of the periodic IMS formula \eqref{eq:IMS-formula}
\[ \tilde H_{\mathbf k} = j_0 \tilde H_{\mathbf k} j_0 + j \tilde H_{\mathbf k} j -h^2( \vert D_xj_0\vert^2 + \vert D_xj \vert^2 ). \]
Thus, we have as in \eqref{eq:IMS}
\[\tilde H_{\mathbf k} = j_0 \tilde H_{\mathbf k} j_0 +k_{R}(x,h(D_x+\mathbf k)) + \mathcal O(h^{6/5}).\]
Since $JF_n J$ is a self-adjoint operator of rank at most $n$, we can write it by the spectral theorem as 
$$JF_nJ = \sum_{i=1}^n \lambda_i \vert \chi_i \rangle \langle \chi_i  \vert.$$
This implies that $(JF_nJ)(\bullet+\gamma) = \sum_{i=1}^n \lambda_i \vert \chi_i (\bullet+\gamma) \rangle \langle \chi_i (\bullet+\gamma) \vert$ and therefore the analogue of \eqref{eq:aux-4} 
\[k_R \ge \sum_{i=1}^n \lambda_ i \sum_{\gamma \in \Gamma} \vert \chi_i (\bullet+\gamma) \rangle \langle \chi_i (\bullet+\gamma) \vert + E' j^2. \]
This allows us to show \eqref{eq:lb} following the same steps as in the first part of the proof.
\end{proof}

\section{Topologically trivial bands}
\label{sec:topology}
The goal of this section is to prove Theorem \ref{theo:triv_bands}.
We shall introduce some preliminary concepts
\begin{defi}
\label{defi:distance}
Let $E,F$ be two closed subspaces of a Hilbert space $H$ with orthogonal projections $\Pi_{E}$ and $\Pi_{F}$ respectively, we define the non-symmetric distance function
\[ d(E,F):=\sup_{x \in E; \Vert x \Vert=1} \inf_{y \in F} \Vert x-y\Vert = \Vert (\operatorname{id}-\Pi_F)\vert_E \Vert = \Vert \Pi_E-\Pi_F \Pi_E \Vert =\Vert \Pi_E-\Pi_E \Pi_F \Vert. \]
\end{defi}
One then has the following
\begin{lemm}{\cite[Prop.1.4]{HS84}}
Let $E,F$ be two closed subspaces with $d(E,F)<1$ and $d(F,E)<1$, then $d(F,E)=d(E,F)$ and $\Pi_{E}\vert_{F}$ and $\Pi_{F}\vert_{E}$ are continuously invertible operators.
\end{lemm}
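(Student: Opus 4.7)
The plan is to reduce both assertions to a standard spectral identity for products of orthogonal projections. Combining Definition \ref{defi:distance} with the $C^*$-identity $\|S\|^2 = \|S^*S\|$ gives
\begin{equation*}
d(E,F)^2 \;=\; \|(\operatorname{id}-\Pi_F)\Pi_E\|^2 \;=\; \|\Pi_E(\operatorname{id}-\Pi_F)\Pi_E\| \;=\; \|\operatorname{id}_E - T^*T\|_{\mathcal{B}(E)},
\end{equation*}
where I introduce the contractions $T := \Pi_F|_E : E \to F$ and $T^* = \Pi_E|_F : F \to E$; the last equality uses that $\Pi_E(\operatorname{id}-\Pi_F)\Pi_E$ vanishes on $E^\perp$ and acts as $\operatorname{id}_E - \Pi_E\Pi_F\Pi_E|_E = \operatorname{id}_E - T^*T$ on $E$. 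Symmetrically, $d(F,E)^2 = \|\operatorname{id}_F - TT^*\|_{\mathcal{B}(F)}$.

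The next step invokes the classical fact that the positive contractions $T^*T \in \mathcal{B}(E)$ and $TT^* \in \mathcal{B}(F)$ share their non-zero spectra. Since $0 \le T^*T \le \operatorname{id}_E$, one has $\|\operatorname{id}_E - T^*T\| = 1 - \inf \sigma(T^*T)$, and analogously for $TT^*$. Hence $d(E,F) < 1$ is equivalent to $0 \notin \sigma(T^*T)$, and $d(F,E) < 1$ to $0 \notin \sigma(TT^*)$. Under both assumptions the full spectra, not only their non-zero parts, coincide; consequently $\inf \sigma(T^*T) = \inf \sigma(TT^*)$ and $d(E,F) = d(F,E)$.

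Finally, invertibility of $T^*T$ on $E$ forces $T$ to be bounded below, hence injective with closed range, while invertibility of $TT^*$ on $F$ forces $T^*$ to be bounded below, so $T$ has dense range. Combining these, $T = \Pi_F|_E$ is a continuous bijection onto $F$ with bounded inverse explicitly given by $(T^*T)^{-1}T^*$; exchanging the roles of $E$ and $F$ yields the corresponding statement for $\Pi_E|_F$. The main (minor) subtlety is that a one-sided bound $d(E,F)<1$ alone would not suffice: $0$ could lie in $\sigma(TT^*)$ while being absent from $\sigma(T^*T)$, so the two-sided hypothesis is precisely what is required to identify the spectra and to propagate boundedness below from one side to invertibility on both.
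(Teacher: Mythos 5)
Your proof is correct. The paper does not reproduce a proof of this lemma — it is cited directly from Helffer--Sj\"ostrand \cite[Prop.~1.4]{HS84} — so there is no in-paper argument to compare against; your spectral route via the $C^*$-identity, the equality $\sigma(T^*T)\setminus\{0\}=\sigma(TT^*)\setminus\{0\}$, and boundedness below of $T$ and $T^*$ is a clean and complete verification of the cited result, and the closing remark correctly identifies why both one-sided distance hypotheses are needed.
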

Let $A$ be a self-adjoint operator on a Hilbert space $H$. Let $I$ be a compact interval, then we have the following result to estimate $d(E,F)$ where $F:=\operatorname{ran}(\indic_{I}(A))$ and $E$ is spanned by trial functions. 
\begin{lemm}{\cite[Prop.2.5]{HS84}}
\label{lemm:auxiliary}
Let $\psi_1,...,\psi_N$ be linearly independent elements spanning a space $E:=\operatorname{span}\{\psi_j; j \in \{1,..,N\}\}$ of $H$ and numbers $\mu_1,...,\mu_N \in I$ such that 
\[ r_j:=(A-\mu_j)\psi_j \text{ with } \Vert r_j \Vert \le \varepsilon.\]
Let $a>0$ such that $\Spec(A)\cap (I+B(0,2a) \setminus I) = \emptyset.$
Then, we have
\[ d(E,F) \le \frac{N^{1/2} \varepsilon}{a \vert \lambda_s^{\text{min}}\vert^{1/2}}\]
where $\lambda_s^{\text{min}}$ is the smallest eigenvalue of the gramian matrix $S = (\langle \psi_j,\psi_k \rangle ).$
\end{lemm}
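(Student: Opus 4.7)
The plan is to reduce the estimate to the spectral decomposition of $A$ with respect to the orthogonal splitting of the ambient Hilbert space into $F \oplus F^\perp$, and then propagate a pointwise-in-$j$ bound on $\Vert \Pi_{F^\perp}\psi_j\Vert$ to a uniform bound over all of $E$ via the Gramian.

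First, I would use that $\Pi_F = \indic_I(A)$ is a spectral projector of $A$ and hence commutes with every $(A - \mu_j)$. The orthogonal decomposition $\psi_j = \Pi_F\psi_j + \Pi_{F^\perp}\psi_j$ is then preserved by $A$, and Pythagoras gives
\[ \varepsilon^2 \ge \Vert r_j\Vert^2 = \Vert (A-\mu_j)\Pi_F\psi_j \Vert^2 + \Vert (A-\mu_j)\Pi_{F^\perp}\psi_j \Vert^2 \ge \Vert (A-\mu_j)\Pi_{F^\perp}\psi_j\Vert^2. \]
The hypothesis $\Spec(A) \cap ((I+B(0,2a))\setminus I) = \emptyset$ is precisely the statement $\dist(\Spec(A|_{F^\perp}), I) \ge 2a$, so for $\mu_j \in I$ the operator $(A-\mu_j)|_{F^\perp}$ is invertible with $\Vert ((A-\mu_j)|_{F^\perp})^{-1}\Vert \le (2a)^{-1}$. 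Consequently
\[ \Vert \Pi_{F^\perp}\psi_j \Vert \le \frac{\varepsilon}{2a}, \qquad j = 1,\dots,N. \]

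Next, I would pass from this individual bound to the full subspace $E$. For $\psi = \sum_{j=1}^N c_j\psi_j \in E$ of unit norm, linearity, the triangle inequality and Cauchy--Schwarz yield
\[ \Vert \Pi_{F^\perp}\psi \Vert \le \sum_{j=1}^N |c_j|\,\Vert \Pi_{F^\perp}\psi_j\Vert \le N^{1/2}\,\Vert c\Vert_{\ell^2}\,\frac{\varepsilon}{2a}. \]
The Gramian identity $\Vert\psi\Vert^2 = c^* S c \ge \lambda_s^{\min}\,\Vert c\Vert_{\ell^2}^2$ gives $\Vert c\Vert_{\ell^2} \le 1/\sqrt{\lambda_s^{\min}}$, whence
\[ \inf_{y\in F}\Vert \psi - y\Vert = \Vert \Pi_{F^\perp}\psi\Vert \le \frac{N^{1/2}\varepsilon}{2a\sqrt{\lambda_s^{\min}}}. \]
Taking the supremum over unit $\psi \in E$ and invoking Definition~\ref{defi:distance} produces the claimed bound, in fact with an extra factor $1/2$ compared to the statement.

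The argument is elementary once the commutation of $\Pi_F$ with $A$ is exploited; the only non-trivial ingredient is the gap hypothesis, which converts the residual bound $\varepsilon$ into a bound on the $F^\perp$-component of $\psi_j$. Recovering the exact constant of \cite[Prop.~2.5]{HS84} would require replacing the crude $\ell^1$--$\ell^2$ Cauchy--Schwarz step by a sharper estimate exploiting the full Gramian structure of $\{\psi_j\}$, but the order-of-magnitude bound above already suffices for the spectral-projector comparison carried out later in Section~\ref{sec:topology}.
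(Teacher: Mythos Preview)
Your argument is correct, and in fact yields the stated bound with an extra factor of $1/2$, as you note. However, the paper does not give its own proof of this lemma: it is quoted verbatim from \cite[Prop.~2.5]{HS84} and used as a black box in Section~\ref{sec:topology}, so there is no proof in the paper to compare against. Your approach---spectral projection onto $F^\perp$, the gap condition to invert $(A-\mu_j)|_{F^\perp}$, and the Gramian to control the $\ell^2$-norm of coefficients---is the standard one and matches the original argument in Helffer--Sj\"ostrand.
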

We shall now give the proof of Theorem \ref{theo:triv_bands}.
\begin{proof}[Proof of Theo.\,\ref{theo:triv_bands}]
We first observe that by the $\mathbf k$ independent unitary equivalence
\[ H_{\mathbf k} = \mathcal U \tilde H_{\mathbf k} \mathcal U^*,\]
it suffices to compute the Chern number of the bands of $\tilde H_{\mathbf k}.$

We consider the scalar operator Schr\"odinger operator
\[ \lambda_{\pm}(x,hD_x) = -h^2 \Delta + \lambda_{\pm}(V(x)).\]
Since its potential $\lambda_{\pm}(V(x))=\frac{V_{\uparrow}(x)+ V_{\downarrow}(x)} {2} \pm U_{\text{eff}}(x)$ is real-valued, it follows that $\lambda_{\pm}(x,hD_x)$ satisfies time-reversal symmetry:

Let $Cu(x)=\overline{u(x)}$ with $C^2 = I$, then
\[ C{\lambda_{\pm}(x,hD_x) u } = \lambda_{\pm}(x,hD_x) C{u}.\]
We have Bloch Hamiltonians 
\[\begin{split} 
\lambda_{\pm}(x,hD_x+\mathbf k)&: H^2(\RR^2/\Gamma) \subset L^2(\RR^2/\Gamma) \to L^2(\RR^2/\Gamma) \\
\lambda_{\pm}(x,hD_x+\mathbf k) &= h^2 (D_x+\mathbf k)^2 + V_{\pm}(x).  
\end{split}\]

Let $P(\mathbf k)$ be the family of orthogonal projectors associated with a finite number of isolated bands of the Bloch-Hamiltonian
$\hat{H}_{\mathbf k}:=\operatorname{diag}(\lambda_{+}(x,D_x+\mathbf k),\lambda_{-}(x,D_x+\mathbf k))$ satisfying 
$C \hat{H}_{\mathbf k} = \hat{H}_{-\mathbf k}C. $

Then, we have by the functional calculus $CP(\mathbf k)C = P(-\mathbf k)$ such that $\partial_i P(-\mathbf k) = -C\partial_i P(\mathbf k)C$ and the Berry curvature $\Omega(\mathbf k)=K(\mathbf k) dk_1 \wedge dk_2$ satisfies
\[ \begin{split} K(-\mathbf k) &= \tr(CP(\mathbf k)C C [ \partial_i P(\mathbf k), \partial_j P(\mathbf k)]C) \\ 
&= \tr(CP(\mathbf k)C C [ \partial_i P(\mathbf k), \partial_j P(\mathbf k)]C) \\
&= \tr(CP(\mathbf k) [ \partial_i P(\mathbf k), \partial_j P(\mathbf k)]C) \\
&= -\tr(P(\mathbf k) [ \partial_i P(\mathbf k), \partial_j P(\mathbf k)]) = -K(\mathbf k),
\end{split}  \]
since $\tr(CAC)=\tr(A^*)$.
This implies the vanishing of the Chern number 
\[ \operatorname{Ch}(P)=\frac{i}{2\pi} \int_{\mathbb R^2/\Gamma^*} \Omega=0.\]

Let $\hat P(\mathbf k)$ be the spectral projection associated with $\hat{H}_{\mathbf k}$ and $\tilde P$ is the spectral projection associated with $\tilde H_{\mathbf k}$, then if we can show that
\[ \left\lVert \int^{\oplus}_{\mathbb R^2/\Gamma^*} (\hat P(\mathbf k) - \tilde P(\mathbf k)) \ d\mathbf k \right\rVert = \operatorname{ess-sup}_{\mathbf k \in \mathbb R^2/\Gamma^*} \Vert \hat P(\mathbf k)-\tilde P(\mathbf k)\Vert<1,\]
the two Chern numbers coincide by homotopy invariance \cite{B86}.

Using the distance function in Def.~\ref{defi:distance} we have for $E,F$ subspaces with projections $\Pi_E,\Pi_F$
\begin{equation}
\label{eq:distEF}
    \Vert \Pi_E-\Pi_F\Vert \le \Vert \Pi_E-\Pi_E \Pi_F \Vert + \Vert \Pi_E \Pi_F-\Pi_F \Vert = d(E,F) + d(F, E).
\end{equation}

Now, let $\hat E(\mathbf k)$ be a low-lying eigenvalue of $\hat{H}_{\mathbf k}$ with eigenfunction $ \hat{\varphi}(\mathbf k),$ then we find 
\[ \begin{split}\Vert h(D_x+\mathbf k)  \hat{\varphi}(\mathbf k)\Vert^2 &= \langle \operatorname{diag}(\hat E(\mathbf k)-V_+,\hat E(\mathbf k)-V_-) \hat{\varphi}(\mathbf k),\hat{\varphi}(\mathbf k) \rangle  \\
&\le \langle \operatorname{diag}((\hat E(\mathbf k)-V_+)_+,(\hat E(\mathbf k)-V_-)_+) \hat{\varphi}(\mathbf k),\hat{\varphi}(\mathbf k) \rangle.
\end{split}\]
Since we are interested in low-lying eigenvalues $\hat E(\mathbf k),$ the harmonic approximation shows that $\operatorname{diag}((\hat E(\mathbf k)-V_+)_+,(\hat E(\mathbf k)-V_-)_+) =\mathcal O(h)$
such that 
\[ \begin{split}\Vert h(D_x+\mathbf k) \hat{\varphi}(\mathbf k)\Vert = \mathcal O(\sqrt{h}).
\end{split}\]
For the full Hamiltonian $H_{\mathbf k}$ with eigenvalue $E(\mathbf k)$ and eigenfunction $\varphi(\mathbf k)$ the same computation shows by  denoting by $A(\mathbf k)$ the set where $E(\mathbf k)-V$ has no negative eigenvalues 
\[ \begin{split}\Vert h(D_x+\mathbf k) \varphi(\mathbf k)\Vert^2 &= \langle (E(\mathbf k)-V) \varphi(\mathbf k),\varphi(\mathbf k) \rangle  \\
&\le \langle (E(\mathbf k)-V) \chi_{A(\mathbf k)} \varphi(\mathbf k),\varphi(\mathbf k) \rangle.
\end{split}\]
Arguing as above, we also find 
\[ \begin{split}\Vert h(D_x+\mathbf k) \varphi(\mathbf k)\Vert = \mathcal O(\sqrt{h}).
\end{split}\]
The Bloch function associated with $\tilde H_{\mathbf k}$ is then given by $\tilde \varphi(\mathbf k)=\mathcal U\varphi(\mathbf k)$ and we find 
\[ h(D_x+\mathbf k) \tilde \varphi(\mathbf k) = (hD_x\mathcal U)\tilde \varphi + \mathcal U h (D_x+\mathbf k) \varphi(\mathbf k).  \]
Taking norms, we find 
\[ \Vert h(D_x+\mathbf k) \tilde \varphi(\mathbf k)\Vert = \mathcal O(\sqrt{h}).  \]
We have thus shown that 
\[ \Vert (\hat{H}_{\mathbf k}-\hat{E}(\mathbf k)) \tilde \varphi(\mathbf k) \Vert = \mathcal O(h^{3/2}) \quad \text{and} \quad \Vert (\tilde{H}_{\mathbf k}-E(\mathbf k)) \hat \varphi(\mathbf k) \Vert = \mathcal O(h^{3/2}). \]

Taking $E = \operatorname{span}\{\hat \varphi(\mathbf k)\}$ and $F =\operatorname{span}\{\tilde \varphi(\mathbf k)\}$ in \eqref{eq:distEF} and applying Lemma \ref{lemm:auxiliary} to both operators $\hat{H}_{\mathbf k}$ and $\tilde {H}_{\mathbf k}$  with $I=(E(\mathbf k) - c_1 h^{3/2},E(\mathbf k)+ c_1h^{3/2})$ for $c_1$ sufficiently large and $a=c_2 h$ with $c_2$ sufficiently small, then Lemma \ref{lemm:auxiliary} (note that all conditions are satisfied by Theorem \ref{theo:harmonic_approximation}) implies for $N=1$, $\varepsilon=\mathcal O(h^{3/2})$, $\lambda_s^{\text{min}}=1$
\[ d(E,F)+d(F,E) = \mathcal O(\sqrt{h}) \]
which finishes the argument.
\end{proof}

\section{Agmon estimate for matrix-valued potentials}
\label{sec:Agmon}
We start by introducing the Agmon distance for Schr\"odinger operators with matrix-valued potentials and follow the presentation in \cite[Chapter 3]{HS12}. It turns out that the adaptation to matrix-valued operators is straightforward if one replaces the potential matrix by its lowest eigenvalue.
We assume that $V$ is a continuous matrix-valued potential. 
Then the eigenvalues $\lambda_1(V) \le ...\le \lambda_n(V)$ also depend continuously on $V$, see \cite[Ch.~2, Sec.~5, Theo.~5.1]{kato}. Let $E$ be such that $\supp(E-\lambda_1(V))_+$ is compact. 
\begin{defi}
Let $x \in \mathbb R^n$ and let $\xi,\eta \in T_x(\mathbb R^n)$. The Agmon (degenerate) inner product is then defined by
\[ \langle \xi,\eta \rangle_x = (\lambda_1(V(x))-E)_+ \langle \xi,\eta \rangle_{\mathbb R^n}.\]
Let $\gamma: [0,1] \to \mathbb R^n$ be an absolutely continuous curve, the Agmon length is  
\[ L_E(\gamma):=\int_0^1 (\lambda_1(V(\gamma(t)))-E)_+^{1/2} \Vert \gamma'(t)\Vert \ dt. \]
For $x,y \in \mathbb R^n$, the Agmon distance is then defined by 
\[ d_E(x,y):=\inf_{\gamma \in P_{x,y}} L_E(\gamma).\]
Here, $P_{x,y}:=\{\gamma \in \operatorname{AC}([0,1]; \RR^n); \gamma(0)=x,\gamma(1)=y \},$ where $\operatorname{AC}([0,1]; \RR^n)$ is the space of absolutely continuous curves taking values in $\mathbb R^n.$
Since the function $d_E$ is locally Lipschitz continuous in both of its arguments, it is differentiable almost everywhere. In particular, at the points where the function is differentiable 
\[\vert \nabla_x d_E(x,y)\vert^2 \le (\lambda_1(V(x))-E)_+. \]
We also define $\rho_E(x):=d_E(x,0)$ and $\rho_E^{\delta}(x):=(1-\delta)\rho_E$ for $\delta>0$.
\end{defi}
This implies that
$\lambda_1(V(x))-E-\vert \nabla \rho_E(x)\vert^2 \ge 0$  on the set $\{x\in \RR^n: \lambda_1(V)\geq E\}$ and thus for $\delta>0$ we have 
\begin{equation}
\label{eq:estimate}
    \begin{split}
        \lambda_1(V(x))-E-\vert \nabla \rho^{\delta}_E(x)\vert^2 &\ge (1-(1-\delta)^2) (\lambda_1(V(x))-E) \\
        &=(2\delta-\delta^2) (\lambda_1(V(x))-E).
            \end{split}
\end{equation}
Then one shows the following proposition which has been originally proven in \cite[Theo.~1.1]{HS84}. We shall state it again for the matrix-valued case.  
\begin{prop}{}
\label{prop:Agmon}
Let $\Omega \subset \mathbb R^m$ be a bounded domain with $C^2$-boundary and $H_h=-h^2\Delta + V$ with $V \in C(\overline{\Omega}; \CC^{n \times n})$ self-adjoint on $L^2(\Omega; \CC^n)$ with Dirichlet boundary conditions. Then for $F_+,F_- \ge 0$ such that 
\[F_+^2 - F_-^2 = \lambda_1( V) -\vert \nabla \rho_E^{\delta} \vert^2  -E ,\quad F=F_+ + F_-\]
and $\delta>0$ we have for $u \in H^2(\Omega) \cap H^1_0(\Omega)$ and $E'\le E$
\[h^2 \Vert \nabla(e^{\rho_E^{\delta}/h}u)\Vert^2 + \frac{1}{2} \Vert e^{\rho_E^{\delta}/h} \vert F_+ u \vert \Vert^2\le \Vert \frac{1}{F} e^{\rho_E^{\delta}/h} (H_h-E') u \Vert^2  +\frac{3}{2} \Vert F_-  e^{\rho_E^{\delta}/h} u \Vert^2.
\]
\end{prop}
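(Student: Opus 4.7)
I would follow the classical Agmon strategy, conjugating the operator by the weight $e^{\rho_E^{\delta}/h}$ and exploiting the matrix inequality $V \ge \lambda_1(V)\operatorname{Id}$ to reduce to a scalar coercivity estimate. Concretely, I set $v = e^{\rho_E^{\delta}/h} u$ (noting that $\rho_E^{\delta}$ is locally Lipschitz, so $v$ still lies in the Dirichlet form domain and vanishes on $\partial\Omega$) and compute, as in the scalar theory,
\begin{equation*}
-h^2 e^{\rho/h}\Delta\bigl(e^{-\rho/h} v\bigr) = -h^2 \Delta v + 2h\,\nabla\rho\cdot\nabla v + h(\Delta\rho)v - |\nabla \rho|^2 v,
\end{equation*}
where $\rho=\rho_E^{\delta}$. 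Pairing with $v$ and taking real parts, the cross term integrates by parts against $\nabla|v|^2$ and cancels the $h\Delta\rho$ contribution (no boundary term thanks to the Dirichlet condition), yielding
\begin{equation*}
\operatorname{Re}\langle e^{\rho/h}(-h^2\Delta)e^{-\rho/h} v, v\rangle = h^2 \|\nabla v\|^2 - \||\nabla\rho|\, v\|^2.
\end{equation*}

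Next I would incorporate the matrix potential. Since $V$ is Hermitian and $V(x)\ge \lambda_1(V(x))\operatorname{Id}_{\CC^n}$ pointwise, one has $\operatorname{Re}\langle (V-E')v, v\rangle \ge \int (\lambda_1(V)-E')|v|^2$, and using $E'\le E$ together with the defining identity $F_+^2 - F_-^2 = \lambda_1(V) - |\nabla\rho|^2 - E$ gives
\begin{equation*}
\operatorname{Re}\langle e^{\rho/h}(H_h - E') u, v\rangle \ge h^2\|\nabla v\|^2 + \|F_+ v\|^2 - \|F_- v\|^2.
\end{equation*}
This is the key coercive lower bound. The remaining step is a Cauchy--Schwarz absorption: write the left-hand side as $\langle F^{-1} e^{\rho/h}(H_h-E')u,\; Fv\rangle$ and apply Young's inequality $|ab|\le \tfrac{1}{4\varepsilon}a^2 + \varepsilon b^2$ with $\varepsilon = 1/4$ to obtain
\begin{equation*}
\bigl|\operatorname{Re}\langle e^{\rho/h}(H_h-E')u, v\rangle\bigr| \le \bigl\|\tfrac{1}{F}e^{\rho/h}(H_h-E')u\bigr\|^2 + \tfrac14\|Fv\|^2.
\end{equation*}
Because $F=F_++F_-$, the elementary bound $F^2 \le 2F_+^2 + 2F_-^2$ converts the last term into $\tfrac12\|F_+ v\|^2 + \tfrac12\|F_- v\|^2$. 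Rearranging yields exactly
\begin{equation*}
h^2\|\nabla v\|^2 + \tfrac{1}{2}\|F_+ v\|^2 \le \bigl\|\tfrac{1}{F}e^{\rho/h}(H_h-E')u\bigr\|^2 + \tfrac{3}{2}\|F_- v\|^2,
\end{equation*}
which is the claimed inequality.

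The only genuine subtleties are technical. First, $\rho_E^{\delta}$ is only Lipschitz, so strictly speaking the identities above should be derived for a smooth approximation and then passed to the limit by dominated convergence; this is standard since $|\nabla \rho_E^{\delta}|^2 \le (1-\delta)^2(\lambda_1(V)-E)_+$ almost everywhere. Second, the function $F$ may vanish where $F_+ = F_- = 0$, so the expression $F^{-1}e^{\rho/h}(H_h-E')u$ should be interpreted in the convention that $0/0=0$ and the inequality is vacuous on that set, which is harmless because the coercivity bound is trivial there. The main conceptual point I would flag as the ``work'' of the argument is the passage from the matrix inequality $V \ge \lambda_1(V)\operatorname{Id}$ to a scalar weighted $L^2$ estimate; once this reduction is made, the remainder reproduces Helffer--Sjöstrand's original calculation verbatim.
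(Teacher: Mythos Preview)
Your proof is correct and follows essentially the same route as the paper's: both conjugate by the Agmon weight, use integration by parts to obtain the identity $h^2\|\nabla(e^{\rho/h}u)\|^2 = \int e^{2\rho/h}(|\nabla\rho|^2 - V + E')u\cdot u + \int e^{2\rho/h}(H_h-E')u\cdot u$, invoke $V\ge\lambda_1(V)\operatorname{Id}$, and then absorb via the weighted Cauchy--Schwarz inequality with $F=F_++F_-$ and $(F_++F_-)^2\le 2F_+^2+2F_-^2$. Your additional remarks on the Lipschitz regularization of $\rho_E^{\delta}$ and the interpretation when $F=0$ are sensible technical footnotes that the paper does not spell out.
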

\begin{proof}
Integrating by parts yields
\begin{equation}
\label{eq:equations}
 \begin{split} 
&h^2 \int_{\Omega} \vert \nabla(e^{\rho_E^{\delta}/h}u)\vert^2 \ dx= -h^2 \int_{\Omega} \Delta(e^{\rho_E^{\delta}/h}u) \cdot e^{\rho_E^{\delta}/h} u \ dx  \\
=&-h^2 \int_{\Omega} e^{2\rho_E^{\delta}/h} \Delta u \cdot  u \ dx - 2 h  \int_{\Omega}  e^{2\rho_E^{\delta}/h} \sum_j (\partial_{x_j} \rho_E^{\delta})(  \partial_{x_j} u) \cdot u \ dx \\
& \qquad - h^2 \int_{\Omega} (\Delta e^{\rho_E^{\delta}/h} ) u \cdot e^{\rho_E^{\delta}/h} u \ dx \\
\overset{(i)}{=}&-h^2 \int_{\Omega} e^{2\rho_E^{\delta}/h} \Delta u \cdot  u \ dx + \int_{\Omega} e^{2\rho_E^{\delta}/h} \vert \nabla \rho_E^{\delta} \vert^2 \vert u \vert^2 \ dx \\
=&\int_{\Omega}  e^{2\rho_E^{\delta}/h} (\vert \nabla \rho_E^{\delta} \vert^2  - V +E')u \cdot u  \ dx + \int_{\Omega}  e^{2\rho_E^{\delta}/h} (H_h - E') u \cdot u \ dx. \end{split}
\end{equation}
In (i), we used that 
\[\begin{split} &2 h  \int_{\Omega}  e^{2\rho_E^{\delta}/h} \sum_j (\partial_{x_j} \rho_E^{\delta})(  \partial_{x_j} u) \cdot u \ dx+ h^2 \int_{\Omega} (\Delta e^{\rho_E^{\delta}/h} ) u \cdot e^{\rho_E^{\delta}/h} u \ dx \\
&= h  \int_{\Omega}  e^{2\rho_E^{\delta}/h} \sum_j (\partial_{x_j} \rho_E^{\delta})\partial_{x_j}\vert u \vert^2 \ dx+ \sum_{j}  \int_{\Omega} (\vert \partial_{x_j}\rho_E^{\delta}\vert^2 + h \partial_{x_j}^2 \rho_E^{\delta})e^{2\rho_E^{\delta}/h} \vert u\vert^2 \ dx \\
&= -h  \int_{\Omega}  \sum_j \partial_{x_j}(e^{2\rho_E^{\delta}/h} \partial_{x_j} \rho_E^{\delta})\vert u \vert^2 \ dx+ \sum_{j}  \int_{\Omega} (\vert \partial_{x_j}\rho_E^{\delta}\vert^2 + h \partial_{x_j}^2 \rho_E^{\delta})e^{2\rho_E^{\delta}/h} \vert u\vert^2 \ dx \\
&= -h  \int_{\Omega}   e^{2\rho_E^{\delta}/h} \vert \nabla  \rho_E^{\delta}\vert^2 \vert u \vert^2 \ dx.
\end{split}\]
We conclude from \eqref{eq:equations}
\[ \begin{split}
&h^2 \int_{\Omega} \vert \nabla(e^{\rho_E^{\delta}/h}u)\vert^2 \ dx + \int_{\Omega}  e^{2\rho_E^{\delta}/h} (\lambda_1( V) -\vert \nabla \rho_E^{\delta} \vert^2  -E') \vert u \vert^2  \ dx \\
&\le \int_{\Omega}  e^{2\rho_E^{\delta}/h} (H_h - E') u \cdot u \ dx \\
&\le \Vert \frac{1}{F} e^{\rho_E^{\delta}/h} (H_h-E') u \Vert^2 + \frac{1}{4} \Vert F e^{\rho_E^{\delta}/h} u \Vert^2 \\
&\le \Vert \frac{1}{F} e^{\rho_E^{\delta}/h} (H_h-E') u \Vert^2 + \frac{1}{2} \Vert F_+ e^{\rho_E^{\delta}/h} u \Vert^2+\frac{1}{2} \Vert F_- e^{\rho_E^{\delta}/h} u \Vert^2.
\end{split}\]
Writing $F = F_++F_-$ where 
\[F_+^2 - F_-^2 = \lambda_1( V) -\vert \nabla \rho_E^{\delta} \vert^2  -E \le \lambda_1( V) -\vert \nabla \rho_E^{\delta} \vert^2  -E' ,\]
we find 
\[h^2 \Vert \nabla(e^{\rho_E^{\delta}/h}u)\Vert^2 + \frac{1}{2} \Vert e^{\rho_E^{\delta}/h}  F_+ u  \Vert^2\le \Vert \frac{1}{F} e^{\rho_E^{\delta}/h} (H_h-E') u \Vert^2  +\frac{3}{2} \Vert F_-  e^{\rho_E^{\delta}/h} u \Vert^2.
\]
\end{proof}
\begin{rem}
On unbounded domains the exponential containing the Agmon distance is possibly unbounded.
    Using an approximation argument the proof of this estimate carries over to $\Omega=\mathbb R^n$ where the right-hand side may be infinite in this case, see the discussion in \cite[Remark 1.2]{HS84}.
\end{rem}

Proposition \ref{prop:Agmon} is effective when applied to our single well operator with effective potential $V_{\text{well}}:=V + (1-\chi)$, where $\chi$ is defined after equation \eqref{eq:single-well}.
Let $H_{\text{well}}=-h^2\Delta + V_{\text{well}}: H^2(\RR^2)  \to L^2(\mathbb R^2)$ and $U:=\{x \in \mathbb R^2 ; E \ge \lambda_1(V_{\text{well}}(x))\},$ the well. On the set $M:=\{ x \in \mathbb R^2; d_{d_E}(U,x)>\delta\}$, where $d_{d_E}$ is the distance measure induced by the Agmon distance. 
We assume that $E$ is small enough such that $\mathbb R^2 \setminus M$ is a neighbourhood of the single well located in a small compact neighbourhood of zero. We set $F_+:=\sqrt{\lambda_1(V_{\text{well}}(x))-E-\vert \nabla \rho_E^{\delta} \vert^2}$ which is, by \eqref{eq:estimate}, uniformly bounded away from zero on $M.$ We can extend $F_+$ thereby to a strictly positive function on $\mathbb R^2.$ Since we ask $F_+^2 - F_-^2 = \lambda_1( V_{\text{well}}(x)) -\vert \nabla \rho_E^{\delta} \vert^2  -E,$ we see that $F_-$, has its support in the compact set $\mathbb R^2 \setminus M=\{ x \in \mathbb R^2; d_{d_E}(U,x) \le \delta\}.$ Thus, we conclude that at first that $F_-$ is a bounded function in addition that $F$ is uniformly bounded away from zero.

We have thus obtained
\begin{corr}
\label{corr:well}
Let $u$ be an eigenfunction to $H_{\text{well}}=-h^2\Delta + V_{\text{well}}$ with eigenvalue $E'\le E$, then for some $C_{\delta}>0$
\[h^2 \Vert \nabla(e^{\rho_E^{\delta}/h}u)\Vert^2 +  \Vert e^{\rho_E^{\delta}/h}  u  \Vert^2\le C_{\delta}\Vert  e^{\delta/h} u \Vert^2.
\]
\end{corr}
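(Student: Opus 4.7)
The plan is to apply Proposition \ref{prop:Agmon} to the eigenfunction $u$ with $H_h=H_{\text{well}}$ and control the resulting auxiliary weights $F_{\pm}$ using the geometric structure of the single well. Since $(H_{\text{well}}-E')u=0$, the first term on the right-hand side of Proposition \ref{prop:Agmon} vanishes and the estimate reduces to
\[ h^2 \Vert \nabla(e^{\rho_E^{\delta}/h}u)\Vert^2 + \tfrac{1}{2} \Vert e^{\rho_E^{\delta}/h} F_+ u\Vert^2 \le \tfrac{3}{2} \Vert F_- e^{\rho_E^{\delta}/h} u\Vert^2. \]

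The first step is to choose $F_+$ bounded below by a positive constant on all of $\mathbb R^2$. On the far region $M=\{x: d_{d_E}(U,x)>\delta\}$, estimate \eqref{eq:estimate} yields $\lambda_1(V_{\text{well}})-E-\vert\nabla\rho_E^{\delta}\vert^2\ge (2\delta-\delta^2)(\lambda_1(V_{\text{well}})-E)$. Thanks to the $(1-\chi)$ piece of $V_{\text{well}}$ together with the periodicity of $V$, the quantity $\lambda_1(V_{\text{well}})-E$ is bounded below by a positive constant on $M$ provided $E$ is chosen only slightly above $\lambda_-(0)$. Thus $F_+$ is strictly positive on $M$, and I would extend it smoothly to all of $\mathbb R^2$ as a strictly positive function. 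Setting $F_-^2=F_+^2-(\lambda_1(V_{\text{well}})-E-\vert\nabla\rho_E^{\delta}\vert^2)$ then forces $\mathrm{supp}(F_-)\subset \mathbb R^2\setminus M=\{x: d_{d_E}(U,x)\le\delta\}$, a compact set on which $F_-$ is bounded.

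Next, the uniform lower bound $F_+\ge c_0>0$ gives $\tfrac{1}{2}\Vert e^{\rho_E^{\delta}/h}F_+u\Vert^2\ge \tfrac{c_0^2}{2}\Vert e^{\rho_E^{\delta}/h}u\Vert^2$, producing the desired term on the left of the claimed inequality. The crucial geometric observation for the right-hand side is that on $\mathrm{supp}(F_-)$ every point $x$ lies within Agmon distance $\delta$ of $U$. Since $0\in U$ and curves lying entirely in $U$ have zero Agmon length, this gives $\rho_E(x)=d_E(x,0)\le \delta$ and hence $\rho_E^{\delta}(x)=(1-\delta)\rho_E(x)\le\delta$ on $\mathrm{supp}(F_-)$. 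Therefore
\[ \Vert F_- e^{\rho_E^{\delta}/h} u\Vert^2 \le \Vert F_-\Vert_{\infty}^2 \, e^{2\delta/h}\, \Vert u\Vert^2 = C\,\Vert e^{\delta/h} u\Vert^2, \]
and combining the two estimates yields the corollary with an appropriate $C_{\delta}$.

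The main technical obstacle is ensuring that Proposition \ref{prop:Agmon} genuinely applies on the unbounded domain $\mathbb R^2$, as the proposition is stated for bounded domains with Dirichlet boundary. As noted in the remark following the proposition, this is resolved by an approximation argument: one applies the bounded-domain estimate on balls $B(0,R)$ (after multiplication by a cutoff supported in $B(0,R)$, whose commutator contributions are $O(h^2)$ and can be absorbed) and then passes to the limit $R\to\infty$, using that $u$ lies below the essential spectrum of $H_{\text{well}}$ so that all integrands have finite limits. A secondary subtlety to keep in mind is the uniform positivity of $\lambda_1(V_{\text{well}})-E$ on the unbounded set $M$, which relies precisely on the single-well modification $V_{\text{well}}=V+(1-\chi)$ to lift the potential above $E$ outside a small neighborhood of $0$.
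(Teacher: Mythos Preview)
Your proposal is correct and follows essentially the same approach as the paper: the paper's argument (given in the paragraph immediately preceding the corollary) also applies Proposition~\ref{prop:Agmon} to $H_{\text{well}}$, defines $F_+$ via \eqref{eq:estimate} on $M=\{x: d_{d_E}(U,x)>\delta\}$, extends it to a strictly positive function, and observes that $F_-$ is then supported in the compact set $\mathbb R^2\setminus M$ where $\rho_E^{\delta}\le\delta$. Your write-up in fact spells out several steps (the vanishing of the $(H_h-E')u$ term, the explicit bound $\rho_E(x)\le\delta$ via path-connectedness of $U$, and the passage to the unbounded domain) that the paper leaves implicit.
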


\section{Exponentially narrow bands}
\label{sec:exp_width}
In this section we study the width of the bands $E_1(h,\mathbf k)\le E_2(h,\mathbf k) \le...$ of the Hamiltonian $H_{\mathbf k}.$ 

We then choose an interval $I(h) $ of length $\mathcal O(h)$ by Theorem \ref{theo:harmonic_approximation} that encloses the lowest $N$ eigenvalues 
\[\begin{split}
&\Spec(H_{\mathbf k})\cap I(h) = \{E_1(h,\mathbf k),..,E_{N}(h,\mathbf k)\} \text{ with } (H_{\mathbf k}-E_j(h,\mathbf k)) u_j(h,\mathbf k) = 0 \text{ and }\\
&\Spec(H_{\text{well}}) \cap I(h) = \{E_1(h),..,E_N(h)\}  \text{ with } (H_{\text{well}}-E_j(h)) \varphi_j(h)= 0,
 \end{split}\]
 where $(u_j(h,\mathbf k))_j$ are an orthonormal system in $L^2(\mathbb R^2/\Gamma)$ and $(\varphi_j(h))$ are orthonormal system in $L^2(\RR^n).$

The goal of this section is to show that low-lying bands of $H_{\mathbf k}$ are of exponentially small width and thus prove Theorem \ref{theo:exp_narrow_bands}.

\begin{proof}[Proof of Theo.\,\ref{theo:exp_narrow_bands}]
We define $t_{\gamma}\varphi(x):=\varphi(x+\gamma)$ where $\gamma \in \Gamma$.
Let $\chi_0 \in C_c^{\infty}(\mathbb R^2 \setminus M;[0,\infty))$ with $\chi_0\vert_{\overline{U}}=1.$ We then define $v_{\mathbf k,j}(h) \in H^2(\mathbb R^2/\Gamma)$ by
\[ v_{\mathbf k,j}(h) := \sum_{\gamma \in \Gamma} e^{-i \mathbf k (\gamma + x)} t_{\gamma}(\chi_{0} \varphi_j(h)).\]
We also compute the overlap 
\begin{equation}
\label{eq:overlap}
 \langle v_{\mathbf k,i}, v_{\mathbf k,j} \rangle_{L^2(\mathbb R^2/\Gamma)} = \int_{\mathbb R^2/\Gamma} \chi_0^2\varphi_i  \overline{\varphi_j} \ dx +\int_{\mathbb R^2/\Gamma} \sum_{\gamma \neq \gamma'} e^{i(\gamma-\gamma')\mathbf k} t_{\gamma}(\chi_{0} \varphi_i) t_{\gamma'}(\chi_{0} \overline{ \varphi_j}) \ dx
 \end{equation}
 showing that for some $C>0$
\begin{equation}
\label{eq:overlap2}
 \langle v_{\mathbf k,i}, v_{\mathbf k,j} \rangle_{L^2(\mathbb R^2/\Gamma)}  = \delta_{i,j} + \mathcal O(e^{-C/h}).
  \end{equation}

We also observe that $H_{\mathbf k} v_{\mathbf k,i}= E_i(h) v_{\mathbf k,i}+r_{\mathbf k,i}$ with $\Vert r_{\mathbf k,i} \Vert = \mathcal O(e^{-C/h})$ for possibly different $C>0$. This follows from
\[\begin{split} H_{\mathbf k} v_{\mathbf k,i}&= E_i(h) v_{\mathbf k,i} +\sum_{\gamma \in \Gamma} e^{-i\mathbf k (\gamma+x)} [-h^2\Delta^2,t_{\gamma}(\chi_{0})] \varphi_i(h,x+\gamma) \\
& \quad -\sum_{\gamma \in \Gamma} e^{-i\mathbf k (\gamma+x)} t_{\gamma}((1-\chi)\chi_0 \varphi_i)
\end{split}\]
where $(1-\chi)\chi_0 \equiv 0$ by choosing the support suitably, and applying Corollary \ref{corr:well} to \[ [-h^2\Delta,t_{\gamma}(\chi_{0})] = -2h^2 \nabla t_{\gamma}(\chi_{0}) \cdot \nabla - h^2 \Delta t_{\gamma}(\chi_{0})\]
which is supported away from the well and this leads to exponentially small corrections.

Self-adjointness and Theorem \ref{theo:harmonic_approximation} imply that (cf.~\cite[Theorem 12.7]{zw12})
\[ 
d(E_i(h,\mathbf k),E_i(h))=\mathcal O(e^{-C/h}),\]
which is the exponential closeness of the corresponding eigenvalues of $H_{\mathbf k}$ and $H_{\text{well}}.$
\end{proof}

\smallsection{Acknowledgements} S.B. acknowledges support from SNF Grant PZ00P2 216019. M.Y. was partially supported by the by the NSF grant DMS-1952939 and by the Simons Targeted Grant Award No. 896630.

\end{document}